\newcommand{\blind}{1}
\newcommand\independent{\protect\mathpalette{\protect\independenT}{\perp}}
\def\independenT#1#2{\mathrel{\rlap{$#1#2$}\mkern2mu{#1#2}}}
\definecolor{firebrick1}{HTML}{FF3030}
\definecolor{dodgerblue}{HTML}{1E90FF}
\newcommand{\br}{{\mathbb R}}
\theoremstyle{definition}
\newtheorem{proposition}{Proposition}
\newtheorem*{definition}{Definition}
\newtheorem{corollary}[proposition]{Corollary}
\newtheorem{theorem}{Theorem}
\newtheorem{lemma}{Lemma}
\newcommand{\eps}{{\epsilon}}
\newcommand{\norm}[1]{\left \lVert #1 \right \rVert}
\newcommand{\fdp}{\textnormal{FDP}}
\newcommand{\fdr}{\textnormal{FDR}}
\newcommand{\fdphat}{\widehat{\textnormal{FDP}}}
\newcommand{\cR}{\mathcal R}
\newcommand{\cU}{\mathcal U}
\newcommand{\sU}{\mathscr U}
\newcommand{\cH}{\mathcal H}
\newcommand{\cT}{\mathcal T}
\newcommand{\F}{\mathfrak F}
\newcommand{\EE}[1]{\mathbb E\left[#1\right]}
\newcommand{\PP}[1]{\mathbb P\left[#1\right]}
\newcommand{\PPst}[2]{\mathbb P\left[\left. #1 \right| #2 \right]}
\newcommand{\cM}{\mathcal M}
\newcommand{\ind}{\mathbbm 1}
\begin{document}

\def\spacingset#1{\renewcommand{\baselinestretch}%
{#1}\small\normalsize} \spacingset{1}


\if1\blind
{
\title{\bf Filtering the rejection set while preserving false discovery rate control}
\author{Eugene Katsevich\thanks{
	EK acknowledges support by the Hertz Foundation; CS by NSF DMS 1712800, NIH R01MH113078, and the Stanford Discovery Innovation Funds; and MB by the Israel Science Foundation grant no. 1112/14. The phenome-wide association study data analysis was conducted using the UK Biobank Resource (application number 27837).}\hspace{.2cm}\\
Department of Statistics and Data Science,\\
Carnegie Mellon University\\
and\\
Chiara Sabatti\\
Departments of Statistics and Biomedical Data Science,\\ 
Stanford University\\
and\\
Marina Bogomolov\\
Faculty of Industrial Engineering and Management,\\ 
Technion - Israel Institute of Technology}
\maketitle
} \fi

\if0\blind
{
\bigskip
\bigskip
\bigskip
\begin{center}
{\LARGE\bf Title}
\end{center}
\medskip
} \fi

\bigskip
\begin{abstract}
Scientific hypotheses in a variety of applications have  domain-specific structures, 
such as the tree structure of the International Classification of Diseases (ICD), the directed acyclic graph structure of the Gene Ontology (GO), or the spatial structure in genome-wide association studies. In the context of multiple testing, the resulting relationships among hypotheses can create redundancies among rejections that hinder interpretability. This leads to the practice of filtering rejection sets obtained from multiple testing procedures, which may in turn invalidate their inferential guarantees. We propose Focused BH, a simple, flexible, and principled methodology to adjust for the application of any pre-specified filter. We prove that Focused BH controls the false discovery rate under various conditions, including when the filter satisfies an intuitive monotonicity property and the p-values are positively dependent. We demonstrate in simulations that Focused BH performs well across a variety of settings, and illustrate this method's practical utility via analyses of real datasets based on ICD and GO.
\end{abstract}

\noindent%
{\it Keywords:} structured multiple testing, tree, directed acyclic graph, outer nodes, phenome-wide association study, Gene Ontology enrichment analysis

\spacingset{1.5} 


\section{Testing hierarchically structured hypotheses} \label{sec:intro}

Hierarchical structures are ubiquitous in biomedical applications. Consider for example the International Classification of Diseases (ICD), where roughly 20,000 disease codes are structured as a tree (in fact, a forest of trees) with levels of increasing specificity. For instance, the more specific term ``ankylosing spondylitis" is a child of the more general term ``spondylopathies" (vertebrae disorders). Another example is the Gene Ontology (GO) \cite{AetE00}, a hierarchically organized structure containing thousands of biological processes (GO terms), each annotated with several genes known to be involved in that process. In many cases, each node in the hierarchical structure corresponds to a scientific hypothesis. In phenome-wide association studies (PheWAS) \cite{DetC10}, researchers explore the possible connection between a genetic marker and each ICD code. In Gene Ontology enrichment analysis, the goal is to interpret sets of differentially expressed genes by finding biological processes whose annotated genes significantly overlap with these gene sets. Therefore, we obtain multiple testing problems where hypotheses are identified with the nodes in a hierarchical structure.

In such structured multiple testing problems, the hypotheses are 
interrelated. For example, in some cases hypotheses are linked via logical implications, such that if a node is null then all its descendants are also null, see~(\ref{self-contained}). 
Whether due to these or other relationships among nodes, there is some redundancy built into the testing problem. This redundancy is difficult to remove before seeing the data, because the best choice of hypotheses to keep is likely to depend on the data. For example, when the  logical relationships exist between nodes and their descendants, the nodes further down may be more interesting to reject as they carry information at ``higher resolution." But there is often less power to make higher-resolution rejections, so the power of choosing a resolution in advance  depends heavily on one's ability to guess the location of the signal before having seen the data. 

Therefore, it is common in hierarchical settings to apply standard multiple testing procedures, like the Benjamini-Hochberg (BH) algorithm \cite{BH95}, and then take care of the redundancy post hoc. Once a rejection set is in hand, it may be much clearer which nodes should be kept and which should be discarded. We refer to the process of choosing a subset of rejections to keep as \textit{filtering}. For example, the \textit{outer nodes filter} \cite{Y08, GS11, MG15_Holm} restricts attention to those rejected hypotheses for which no descendants have also been rejected (see Figure \ref{fig:toy_example}). This filter is most natural for trees. For more complex structures like GO, domain-specific filters 
have been developed. Given any (idempotent) filter, we can define a \textit{non-redundant} rejection set as one that does not change when filtered.

Applying a filter to the output of a multiple testing procedure can raise issues, however, because in some cases the filtering step invalidates the inferential guarantee of the testing procedure. In particular, this problem arises when the error rate being targeted is the \textit{false discovery rate} (FDR): the expected proportion of false discoveries in the rejection set. For example, outer nodes filtering tends to inflate the proportion of false discoveries (FDP) because it has a preference for nodes lower in the tree; these are more informative but, given logical relationships, are more often null. Figure \ref{fig:toy_example} illustrates this issue with a toy example. 	Therefore, applying an FDR procedure like BH and then focusing on outer node rejections is an intuitive but misleading attempt to control the FDR of the latter set. 

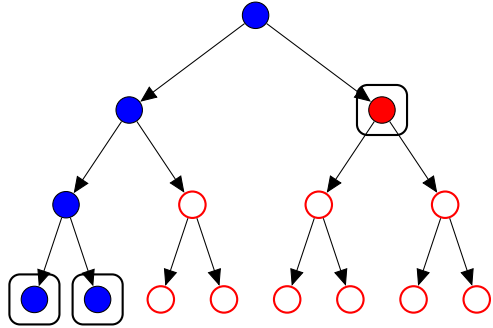
\begin{figure}[h!]
\centering

\begin{tikzpicture}[scale = 0.21]
\node[obs, text opacity=1,fill=blue,text opacity=1, minimum size = 10pt]  (H11)  at(1,1)   {} ; 
\node[obs,  text opacity=1,fill=blue,text opacity=1, minimum size = 10pt]  (H12)  at(5,1)  {} ; 
\node[obs, text opacity=1,fill=white,text opacity=1,thick, draw = red, minimum size = 10pt]  (H13)  at(9,1)   {} ; 
\node[obs,  text opacity=1,fill=white,text opacity=1, thick, draw = red, minimum size = 10pt]  (H14)  at(13,1)  {} ; 
\node[obs, text opacity=1,fill=white,text opacity=1, thick, draw = red, minimum size = 10pt]  (H15)  at(17,1)   {} ; 
\node[obs,  text opacity=1,fill=white,text opacity=1, thick, draw = red, minimum size = 10pt]  (H16)  at(21,1)  {} ; 
\node[obs, text opacity=1,fill=white,text opacity=1, thick, draw = red, minimum size = 10pt]  (H17)  at(25,1)   {} ; 
\node[obs,  text opacity=1,fill=white,text opacity=1, thick, draw = red, minimum size = 10pt]  (H18)  at(29,1)  {} ; 

\node[obs, text opacity=1,fill=blue,text opacity=1, minimum size = 10pt]  (H21)  at(3,7)   {} ; 
\node[obs,  text opacity=1,fill=white,text opacity=1, thick, draw = red, minimum size = 10pt]  (H22)  at(11,7)  {} ; 
\node[obs, text opacity=1,fill=white,text opacity=1, thick, draw = red, minimum size = 10pt]  (H23)  at(19,7)   {} ; 
\node[obs,  text opacity=1,fill=white,text opacity=1, thick, draw = red, minimum size = 10pt]  (H24)  at(27,7)  {} ; 

\node[obs, text opacity=1,fill=blue,text opacity=1, minimum size = 10pt]  (H31)  at(7,13)   {} ; 
\node[obs,  text opacity=1,fill=red,text opacity=1, minimum size = 10pt]  (H32)  at(23,13)  {} ; 

\node[obs, text opacity=1,fill=blue,text opacity=1, minimum size = 10pt]  (H41)  at(15,19)   {} ; 

\edge {H41} {H31}
\edge {H41} {H32}

\edge {H31} {H21}
\edge {H31} {H22}
\edge {H32} {H23}
\edge {H32} {H24}

\edge {H21} {H11}
\edge {H21} {H12}
\edge {H22} {H13}
\edge {H22} {H14}
\edge {H23} {H15}
\edge {H23} {H16}
\edge {H24} {H17}
\edge {H24} {H18}

\node[fit=(H11), draw, rounded corners, thick, inner sep=1.5mm, black] {};
\node[fit=(H12), draw, rounded corners, thick, inner sep=1.5mm, black] {};
\node[fit=(H32), draw, rounded corners, thick, inner sep=1.5mm, black] {};

\end{tikzpicture}

\caption{Outer nodes filtering doubles the false discovery proportion from 1/6 to 1/3. Blue nodes correspond to  non-null hypotheses and red nodes to null ones. A node is filled in when its hypothesis is rejected. Outer nodes are circled.}
\label{fig:toy_example}
\end{figure}

Few methods have been developed to control the FDR of the outer node rejections. Most multiple testing methods for hierarchically structured hypotheses focus instead on either ensuring notions of consistency with respect to the hierarchy or boosting power by leveraging logical relationships \cite{LG16, FBR15, DAGGER, BetS17a, CetM17,katsevich17mkf}. The only such method we are aware of is one proposed by Yekutieli \cite{Y08}, which controls the outer node FDR for tree-structured hypotheses. However, a limitation of this method is that it requires p-values to be independent across the tree. Another class of methods \cite{M08, MG15, MG15_Holm} controls the \textit{family-wise error rate} (FWER), or probability of making any false discoveries, for hierarchically structured hypotheses. Unlike the FDR, controlling the FWER before filtering also leads to FWER control (and thus FDR control) after filtering. However, controlling the FWER can be too stringent a target for modern large-scale testing applications. 

We propose Focused BH, a novel FDR-controlling method 
that inputs any pre-specified filter and outputs a non-redundant rejection set with respect to that filter. We provide a broad set of conditions on the filter and p-value distribution under which Focused BH provably controls the FDR. As an important corollary of our result, we can show that Focused BH applied to arbitrary tree structures with the outer nodes filter controls the FDR if positively dependent (PRDS) p-values are available. This result is practically significant because to our knowledge it is the first procedure targeting outer nodes FDR control while allowing p-value dependency. Though motivated by hierarchical applications, Focused BH is actually a general-purpose methodology for controlling FDR in situations where a pre-specified filter is used to focus attention on a subset of discoveries. Thus, the scope of our work extends beyond hierarchically structured hypotheses. 
For example, filtering is also relevant to applications with spatial structure, such as genome-wide association studies (GWAS) or imaging applications. We formally set up the problem in Section~\ref{sec:problem_setup} and then present Focused BH and its theoretical properties in Section \ref{sec:method}. We then test Focused BH on simulated and real data (Sections \ref{sec:experiments} and \ref{sec:data_analysis}, resp.), and end with a discussion in Section \ref{sec:discussion}. Code to reproduce the simulations and data analysis from Sections~\ref{sec:experiments} and \ref{sec:data_analysis} can be found at \url{https://github.com/ekatsevi/Focused-BH}. 

\section{Problem setup} \label{sec:problem_setup}

\subsection{Hierarchically structured hypotheses} \label{sec:setup_hierarchical}

We encode hierarchical structures as \textit{directed acyclic graphs} (DAGs). A directed graph $\mathcal G$ is a collection of \textit{nodes} labeled $[m] = \{1, \dots, m\}$ and directed edges $j_1 \rightarrow j_2$ for some pairs $(j_1, j_2) \in [m] \times [m]$, and $\mathcal G$ is acyclic if there is no cycle among the edges in $\mathcal G$. A DAG has a set of root nodes (nodes with no incoming edges) at the first level, and the remaining nodes can be arranged in levels based on their depth, defined as the maximum distance from a root node, plus one. For the purposes of this paper, we define a \textit{tree} as a DAG where each node has at most one parent. This definition formally includes both trees with a single root node and forests of trees, but we ignore this distinction because it does not play a role in this paper. The ICD is a tree, and the GO is a non-tree DAG. 

Next, we give one example of how hypotheses at each node of a DAG may arise. Suppose we have a set of $K$ \textit{items}, indexed by $[K]$, such that each node $j$ in $\mathcal G$ is associated with a set of items $\mathcal A_j \subseteq [K]$. Items are basic units of observation or inference, which comprise higher-level concepts represented by nodes. In the GO, items are genes, and $\mathcal A_j$ is the set of genes annotated to the biological process represented by GO term $j$. The edges in the DAG structure correspond to inclusion relationships among sets of items:
\begin{equation}
\text{if } j_1 \leadsto j_2 \text{ for nodes } j_1,j_2,\ \text{then } \mathcal A_{j_1} \supseteq \mathcal A_{j_2},
\label{inclusion_relationships}
\end{equation}
where $j_1 \leadsto j_2$ indicates that $j_2$ is a descendant of $j_1$ in $\mathcal G$. If we have item-level hypotheses $H_1^{\text{item}}, \dots, H_K^{\text{item}}$, then,  the \textit{self-contained} null for each node $j$ is the intersection null
\begin{equation}
H^{\text{node}}_{j} \equiv \bigcap_{k \in \mathcal A_j} H^{\text{item}}_k.
\label{self-contained}
\end{equation}
The \textit{self-contained} null was defined by \cite{GB07} in the context of GO testing, 
and was contrasted with a \textit{competitive} null, which we address in Section~\ref{sec:data_analysis}.  
We abbreviate $H_j \equiv H_j^{\text{node}}$. 
With this definition, the inclusion relationships~(\ref{inclusion_relationships}) among sets of items lead to logical relationships among the corresponding node-level hypotheses:
\begin{equation}
\text{if } {j_1} \leadsto {j_2}, \text{ then } H_{j_1} \Rightarrow H_{j_2},
\label{logical_relationships}
\end{equation}
i.e. if the 
null hypothesis for a node is true, then it is true for all  its descendants. %
Such logical relationships are natural, but not universal, among hierarchical 
testing problems. 

\subsection{Multiple testing and filtering}

Given a choice of hypothesis to test for each node, we arrive at a set $\mathcal H = \{H_1, \dots, H_m\}$ of hypotheses identified with the nodes of a DAG $\mathcal G$. Suppose a vector of p-values $\bm p = (p_1, \dots, p_m)$ is available to test these hypotheses. For example, in the setup described in the previous section, we may have item-level p-values $p_k^{\text{item}}$. These can then be aggregated using global tests to get node-level p-values $\bm p$ for the self-contained hypotheses~\eqref{self-contained}.

A multiple testing procedure, which we denote by $\cM_0$, inputs $\bm p$ and outputs a rejection set $\cR^* \subseteq \mathcal H$. The redundancy in this rejection set would be remedied by a filter, such as
\begin{equation}
\cU^* \equiv \F_{\mathcal G}(\cR^*) \equiv \{j \in \cR^*: \text{no descendants of $j$ in $\mathcal G$ are also in } \cR^*\},
\end{equation}
where $\F_{\mathcal G}$ is the outer nodes filter introduced already. REVIGO \cite{SetS11}  is a more involved filter tailored for GO that performs an agglomerative clustering algorithm on a set of GO terms based on semantic similarity between terms. The representatives for each cluster are chosen based on the graph structure as well as the p-values at each node. As different applications call for different filters, we work with the following very general definition. 

\begin{definition}
Given a vector of p-values $\bm p$ and a subset $\cR \subseteq \mathcal H$, a \textbf{filter} $\F$ is any map $${\F}: (\cR, \bm p) \mapsto \cU \subseteq \mathcal H$$ with the property that $\cU \subseteq \cR$. 
\end{definition}

\noindent Note that $\F$ has as arguments both the set $\cR$ to be filtered and the p-values $\bm p$ (we use asterisks, as in $\cR^*$ or $\mathcal U^*$, to denote outcomes of multiple testing procedures; for dummy variables we use just $\cR$ and $\mathcal U$). In practice, the filter would be applied to the rejection set $\cR^*$ of a multiple testing procedure, which itself is a function of the p-values. However, writing explicitly the dependence of the filter $\F$ on $\cR$ allows us to separate the definition of a rejection set $\cR^*$ from that of  a filter, and it is useful to emphasize whether the p-values contribute to the output $\cU^*$ only via ${\cR^*}$ or otherwise. For example, \textit{fixed filters} are filters $\boldmath \F$ such that $\F(\cR, \bm p) = \F_0(\cR)$ for some function $\F_0$, i.e. $\cU^* = \F(\cR^*, \bm p)$ does not depend on the data once $\cR^*$ is identified. The outer nodes filter is a special case of a fixed filter, with $\F_0 = \F_{\mathcal G}$. REVIGO is an example of a filter that uses the information in $\bm p$. Another example is a \textit{screening filter}: a filter $\F$ such that $\cU=\cR \cap \mathcal S(\bm p)$, where $\mathcal S: \bm p \rightarrow \mathcal S_0 \subseteq \mathcal H$ is called screening function. See \cite{BB14, BetS17a} for applications of such filters in fMRI and GWAS. 

Given a filter $\F$, we consider multiple testing procedures $\mathcal M: \bm p \rightarrow \cU^*$ of the form 
\begin{equation}
\mathcal M = \F \circ \mathcal M_0,
\label{composition}
\end{equation}
where $\mathcal M_0: \bm p \rightarrow \cR^*$ is a ``base procedure" whose output is then filtered via $\F: \cR^* \rightarrow \cU^*$. Our goal is to control the FDR of $\mathcal M$, i.e.
\begin{equation*}
\fdr(\cM) \equiv \EE{\fdp(\cU^*)} \equiv \EE{\frac{|\cU^* \cap \cH_0|}{|\cU^*|}}\leq q,
\end{equation*}
where $\mathcal H_0 \subseteq \mathcal H$ denotes the set of true nulls and by convention $0/0 \equiv 0$. As discussed in the introduction, the FDR control of the composite procedure $\mathcal M$ is usually not guaranteed if the base procedure $\cM_0$ controls the FDR (e.g., see Figure~\ref{fig:BH_experiments}). Instead, special care must be taken to account for the extra filtering step. This motivates us to propose Focused BH, a procedure $\cM$ of the form \eqref{composition} that provably controls the FDR for a broad class of filters.

\section{Focused BH} \label{sec:method}

Our goal is to allow the incorporation of arbitrary filters into the multiple testing process. This includes complex filters like REVIGO, which may only be available to us as black box software packages. We would like to handle hierarchical testing problems where logical relationships~\eqref{logical_relationships} might or might not hold, as well as non-hierarchical problems where these relationships are not even well-defined. Therefore, we design Focused BH without building in special properties of certain filters, and without relying on hierarchical or logical relationships among hypotheses. We present this methodology next.

\subsection{Methodology} \label{sec:FocusedBH-method}

Consider a collection of base rejection sets $\cR$ corresponding to p-value thresholds $t \in [0,1]$:
\begin{equation}
\cR(t, \bm p) \equiv \{j: p_j \leq t\}. \label{BHreject}
\end{equation}
When composed with the filter $\F$, we obtain a collection of filtered rejection sets 
\begin{equation*}
\mathcal U(t, \bm p) \equiv \F(\cR(t, \bm p), \bm p) \subseteq \mathcal H.
\end{equation*}
We construct Focused BH by finding a data-dependent threshold $t^* \in [0,1]$ and defining $\cU^* \equiv \cU(t^*, \bm p)$. We must choose this threshold so that FDR control is guaranteed. To ensure proper calibration, we employ a conservative estimate of $V(t) \equiv |\cU(t, \bm p) \cap \mathcal H_0|$. To this end, note that for a given $t$, we have $\cU(t, \bm p) \subseteq \cR(t, \bm p)$, so $|\cU(t, \bm p) \cap \mathcal H_0| \leq |\cR(t, \bm p) \cap \mathcal H_0|$. This suggests the following definition of $\widehat V$:
\begin{equation}
\mathbb E\left[|\cU(t, \bm p) \cap \mathcal H_0|\right] \leq \mathbb E\left[|\cR(t, \bm p) \cap \mathcal H_0|\right] = m_0\cdot t \leq m \cdot t \equiv \widehat V(t),
\label{BH_derivation}
\end{equation}
where $m_0 \equiv |\mathcal H_0|$. Note that the equality holds under the assumption of uniform null p-values, and becomes an inequality for null p-values that are stochastically larger than uniform. This leads to the following estimate for the FDP of the set $\cU(t, \bm p)$:
\begin{equation}
\fdphat(t) \equiv \frac{m \cdot t}{| \cU(t, \bm p)|}.
\label{FDP_hat}
\end{equation}
We choose the maximum threshold such that the FDP estimate is below the target level $q$:
\begin{equation}
t^* \equiv \max\{t \in \{0, p_1, \dots, p_m\}: \fdphat(t) \leq q\}.
\label{t_star}
\end{equation}
Note that the set in (\ref{t_star}) is always nonempty because $\fdphat(0) = 0$. This leads to Procedure~\ref{focused_BH}, which applies for any filter $\F$.
\noindent
\begin{center}
\begin{minipage}{0.7\linewidth}
\begin{algorithm}[H]
	\SetAlgorithmName{Procedure}{}\; 
	\KwData{p-values $p_1, \dots, p_m$, filter $\F$}
	\For{$t \in \{0, p_1, \dots, p_m\}$}{
		Compute $\displaystyle \fdphat(t) = \frac{m \cdot t}{|\F(\{j: p_j \leq t\}, \bm p)|}$\;
	}	
	Compute $t^* \equiv \max\{t \in \{0, p_1, \dots, p_m\}: \fdphat(t) \leq q\}$\;
	Find the base rejection set $\cR^* = \{j: p_j \leq t^*\}$\;
	Compute $\cU^* = \F(\cR^*, \bm p)$\; 		
	\KwResult{Filtered rejection set $\cU^*$.}
	\caption{\bf Focused BH}
	\label{focused_BH}
\end{algorithm}
\end{minipage}
\end{center}

This procedure is similar to the empirical Bayes formulation of BH proposed by Storey  et al. \cite{Storey04}, and in fact reduces to it when $\F$ is trivial, i.e. $\F(\cR, \bm p) = \cR$. The name Focused BH  reflects the fact that Procedure~\ref{focused_BH} is a generalization of BH and provides guarantees on the set of discoveries scientists decide to focus upon. Next, we move on to stating sufficient conditions on the filter and the p-value dependence structure for Focused BH to control the FDR. 

\subsection{FDR control results} \label{sec:theoretical_results}
Recall that the $p$-values $\bm p$ are valid if $\PP{p_j \leq t} \leq t \ \text{for all } t \in [0,1]$ and all $j \in \mathcal H_0$. Note that this definition includes uniform p-values. 
Let $\bm p_{-j}$ denote all p-values but the $j$th one.
The following is our main theoretical result, establishing FDR control for Focused BH. 
\begin{theorem} \label{main_theorem}
Focused BH controls the FDR at level $q$ under any of the following conditions, assuming that the p-values $\bm p$ are valid in (i) and (iii):
\begin{itemize}
\item[(i)] The p-values  $\bm p$ are PRDS \cite{BY01} and $\F$ is \textit{monotonic}, i.e. for any $\bm p^1 \leq \bm p^2$ (component-wise) and $\cR^1 \supseteq \cR^2$ we have $|\F(\cR^1, \bm p^1)| \geq |\F(\cR^2, \bm p^2)|$;
\item[(ii)] The p-values are obtained from valid item-level p-values $\bm p^{\text{item}}$ via the Simes test \cite{S86} of the self-contained null~\eqref{self-contained} (recall Section~\ref{sec:setup_hierarchical}), 
$\bm p^{\text{item}}$ are PRDS, and $\F$ is monotonic.
\item[(iii)] 
For all $j \in \mathcal H_0$, $p_j$ is independent of $\bm p_{-j},$ 
and the filter is either monotonic or of the form $\F(\cR, \bm p) = \F_0(\cR \cap \mathcal S(\bm p))$, where $\F_0$ is an arbitrary fixed filter and $\mathcal S$ is a \textit{stable} screening function \cite{BH13}, i.e. the set of screened hypotheses $\mathcal S(\bm p)$ does not change if for any $j$ we fix $\bm p_{-j}$ and vary $p_j$ as long as $j \in \mathcal S(\bm p)$.
\end{itemize}
\end{theorem}
The proof of this result relies on previously established techniques \cite{BY01, BR08, BB14, RetJ17}; see supplementary Section \ref{sec:supp_proofs} for all proofs. Part (i) of the theorem allows for a notion of positive p-value dependence called PRDS, making it well-suited for application to structured settings. Part (ii) is a similar statement applying to self-contained hypotheses, requiring that item-level p-values be PRDS. The monotonicity requirement on the filter in parts (i) and (ii) is not too restrictive; one expects many reasonable filters to satisfy this property. For example, the outer nodes filter is monotonic on trees, leading to the following corollary.

\begin{corollary} \label{outer_nodes_corollary}
Given an arbitrary tree $\mathcal T$, Focused BH with the outer nodes filter $\F(\cR, \bm p) = \F_{\mathcal T}(\cR)$ controls the FDR at level $q$ if $\bm p$ is PRDS. 
\end{corollary}

Part (iii) of Theorem~\ref{main_theorem} accommodates a broader class of filters at the cost of a more restrictive assumption on the p-value dependence structure (independence is a special case of PRDS). For example, the outer nodes filter is not monotonic on general DAGs, because adding a rejection at a node with two rejected outer node parents will decrease the total number of outer nodes. Nevertheless, since it is a fixed filter, it satisfies the assumption of part (iii) of Theorem~\ref{main_theorem} for any DAG structure, so we get FDR control under the more stringent assumption of independence. Other filters may be more appropriate for DAG structures; we propose a monotonic filter on general DAGs in supplementary Section~\ref{sec:supp_soft_outer_nodes}. For arbitrary filters and dependencies among the $p$-values, one may revert to a more conservative variant of Focused BH, given in supplementary Section \ref{sec:reshaping}.  We remark that \cite{BR08}, whose results we use to prove part (i) of Theorem \ref{main_theorem}, foresaw procedures like Focused BH. In Section 4.5 of their paper they discuss scenarios where traditional step-up procedures cannot be applied because ``additional constraints come into play." 

While Focused BH is designed to control the FDR post-filtering, it is often the case that its base procedure, which rejects the hypotheses in $\mathcal{R}^*,$ also controls the FDR. 
\begin{theorem} \label{secondary_theorem}
Suppose any of the following conditions holds, assuming $\bm p$ are valid in (iii):
\begin{itemize}
\item[(i)] The assumptions of Theorem~\ref{main_theorem} part (i) hold;
\item[(ii)] The assumptions of Theorem~\ref{main_theorem} part (ii) hold;
\item[(iii)] The p-values $\bm p$ are independent, and the quantity $|\F(\cR, \bm p)|$ does not depend on $\bm p$.
\end{itemize}
Then, the Focused BH base procedure controls the FDR at level $q$: $\EE{\fdp(\cR(t^*, \bm p))} \leq q$. 
\end{theorem}
This result can be important in applications where 
the rejection set is inspected both before and after filtering. 
Theorems above show 
the FDR is controlled for these both sets
for PRDS $p$-values and monotonic filters, or independent $p$-values and fixed filters.


\subsection[Improving power via resampling]{Improving the power of Focused BH} \label{sec:improving}

Note that the estimate $\widehat V$ derived in (\ref{BH_derivation}) does not account for the filter and is in fact the same as the BH estimate. This can make Focused BH conservative if the filter substantially reduces the number of rejected nulls. No matter what the signal configuration, note that $t^*_{\text{FBH}} \leq t^*_{\text{BH}} \text{ almost surely}$, where $t^*_{\text{FBH}}$ and $t^*_{\text{BH}}$ are the p-value thresholds of Focused BH and BH, respectively. This is the case because $|\F(\cR(t, \bm p))| \leq |\cR(t, \bm p)|$ for each $t$. In other words, Focused BH is more conservative than BH. This conservativeness is necessary only to the extent that the filter preferentially chooses nulls over non-nulls. 

To improve the power of Focused BH, we must design an estimate of $V(t)$ with less upward bias. As a theoretical benchmark, we may define an oracle estimate of $V(t)$ via
\begin{equation}
\mathbb E[V(t)] = \mathbb E\left[|\cU(t, \bm p) \cap \mathcal H_0|\right] \equiv \widehat V^{\text{oracle}}(t).
\label{V_hat_oracle}
\end{equation}
By construction, $\widehat V^{\text{oracle}}(t)$ is an unbiased estimate for $V(t)$ (i.e., $\mathbb E[\widehat V^{\text{oracle}}(t)] = \EE{V(t)}$; we use the term ``unbiased" loosely since $V(t)$ is itself random), and thus it accounts for the filter's possible reduction of the number of rejected nulls. Of course, this estimator cannot be computed in practice because it requires access to the ground truth data-generating distribution. Therefore, we propose a resampling approach to approximate it. 

Suppose we have a mechanism to generate p-value vectors $\tilde{\bm p}$ from the global null distribution. This may be the case if we know the null distribution. Otherwise, a permutation approach may be applicable. To illustrate, suppose the p-values were obtained from a data set $\mathcal D$, and that there is a group of permutations acting on the data. For example, suppose that $\bm X \in \br^{n \times K}$ is a gene expression matrix, and $\bm Y \in \{0,1\}^n$ is a binary indicator vector of treatment versus control. Each gene receives a p-value based on, say, a two-sample t-test comparing the expression of this gene in cases and controls. In this setting, the data is $\mathcal D = (\bm X, \bm Y)$ and can be permuted by permuting the labels $\bm Y$: $\tilde{\mathcal D}^{b} = (\bm X, \bm{\tilde Y^{b}})$. Now, suppose that we have $B$ permutations of the data $\tilde{\mathcal D}^{1}, \dots, \tilde{\mathcal D}^{B}$, resulting in p-value vectors $\tilde{\bm p}^1, \dots \tilde{\bm p}^B$ and corresponding filtered sets $\cU(t, \tilde{\bm p}^1), \dots, \cU(t, \tilde{\bm p}^B)$. Then, in certain cases, we may write
\begin{equation}
\begin{split}
\mathbb E[V(t)] = \mathbb E\left[|\cU(t, \bm p) \cap \mathcal H_0|\right] \overset?= \mathbb E\left[|\cU(t, \tilde{\bm p}) \cap \mathcal H_0|\right] \leq \EE{|\cU(t, \tilde{\bm p})|} \approx \frac{1}{B}\sum_{b = 1}^B |\mathcal U(t, \tilde{\bm p}^b)| \equiv \widehat{V}^{\text{perm}}(t).
\label{permutation_derivation}
\end{split}
\end{equation}
The validity of the equality with the question mark depends on the filter and the permutation mechanism; we discuss sufficient conditions for this equality to hold in Section~\ref{sec:supp_perm} of the supplement. For those situations when the above derivation is valid, $\widehat V^{\text{perm}}(t)$ can be a much better estimate than $\widehat V(t) = m\cdot t$ because it incorporates the filter into its definition. We do not yet have FDR control results for this permutation-based version of Focused BH; in particular note that \eqref{permutation_derivation} is not sufficient for FDR control. Nevertheless, we have found that it performs well in our 
simulations; see Section~\ref{sec:experiments}. We remark that there has not been much work on permutation-based FDR control in general, though 
\cite{YB99,TC01,HG18} are relevant. 

Another way to improve the power of Focused BH is to account for the non-null proportion. In supplementary Section~\ref{sec:Storey}, we present an adaptive version of Focused BH that provably controls the FDR under the assumptions of Theorem~\ref{main_theorem}, part (iii).

\subsection{Comparison to other hierarchical testing methods}

Hierarchical structures pose challenges to FDR control but also provide opportunities to gain power. In particular, logical relationships among hypotheses and clustering of non-nulls with respect to the graph structure can both be exploited to boost power. The Structured Holm \cite{MG15} and Yekutieli \cite{Y08} methods exploit the former and latter, respectively. We briefly describe both of these methods and then compare them to Focused BH. 

Structured Holm is designed for FWER control on general DAGs, exploiting logical relationships \eqref{logical_relationships} that are assumed to hold. 
Like Holm's original procedure, Structured Holm is a step-down improvement of Bonferroni's method, except it also adds ancestors of all rejected hypotheses to the rejection set at each step. Due to the logical relationships, this can be done ``for free" while maintaining FWER control under arbitrary dependence. 
Since any FWER-controlling procedure retains the same guarantee after arbitrary filtering, no special adjustments for filtering are necessary.

Yekutieli's hierarchical testing procedure, applicable to trees, is designed to control the FDR while boosting power by adaptively finding subtrees with more non-null nodes. The procedure starts by applying BH at level $q$ to the root nodes of the tree, and then repeatedly applies BH at the same level $q$ to the children of each rejected node from the previous step. Yekutieli proved that, under independent p-values, this procedure controls the FDR among the outer nodes at level $2\cdot D\cdot q \cdot \delta^*$, where $D$ is the depth of the tree and $\delta^*$ is a constant depending on the structure of the problem that is bounded above by 1.44 but is usually near 1. In this paper, we define the Yekutieli procedure by taking $\delta^* = 1$ and applying BH repeatedly, as described above, at level $q/(2\cdot D)$.

Our choice to design Focused BH while allowing for any filter broadens the applicability of our methodology but leaves room for improvements where power can be gained by leveraging the details of the testing problem, as is done by Structured Holm and Yekutieli. Our numerical experiments in Section~\ref{sec:ICD_experiment} confirm that Yekutieli can outperform Focused BH if the signals are highly concentrated in the tree, though Focused BH outperforms the Yekutieli method when signals are more spread out. Indeed, not traversing hypotheses from top to bottom allows Focused BH to avoid the issue that signal may not be very strong near the top of the tree (due to a dilution effect), which poses a problem for all hierarchical methods, including Yekutieli's. It is harder to directly compare Structured Holm to Focused BH because the former controls a more stringent error rate than the latter, but in our simulations we observed that the cost in power of controlling FWER was greater than the gain from exploiting logical relationships. Structured Holm does still have excellent power for a FWER procedure, though this procedure is limited to testing self-contained hypotheses or others that satisfy the logical relationships \eqref{logical_relationships}. Finally, Focused BH is broader than either the Yekutieli procedure or Structured Holm because it applies to non-hierarchically structured multiple testing problems.


\section[Experiments]{Numerical simulations} \label{sec:experiments}

In this section, we evaluate the power of Focused BH via numerical simulations. We work in two parallel settings, with graph structures (ICD and GO) and filters (outer nodes and REVIGO \cite{SetS11}) inspired by PheWAS and GO enrichment analysis, respectively. In both cases, we use the following data-generating mechanism. As in Section~\ref{sec:setup_hierarchical}, each node is assigned a set of items. Some items are chosen to be non-null, and each item $k$ receives a p-value from a one-sided test based on $N(\mu_k, 1)$, where $\mu_k = \mu \cdot  \ind(k\text{ non-null})$ for some signal strength $\mu \geq 0$. We use self-contained hypotheses~\eqref{self-contained}, so a node is defined as non-null if it contains any non-null items. We test this hypothesis by aggregating the p-values of the items it contains. While simple, this data-generating mechanism captures p-value dependencies among nodes due to graph relationships. Now, we present each simulation in more detail. We conduct an additional simulation for spatially structured hypotheses (see supplementary Section~\ref{sec:GWAS_simulation}); the results are qualitatively similar to those in the main text.  

\subsection{Simulation with ICD tree} \label{sec:ICD_experiment}

This simulation is based on the real graph of ICD-10 codes, containing $m = 19154$ nodes, with leaves playing the role of items; a leaf node is associated with each of its ancestor nodes. The ICD graph  is broken down into 22 subtrees, each representing a major class of diseases. The graph contains 5 levels, and
Figure~\ref{fig:ICD_simulation_graph_info} shows how many nodes are at each level and how many items these nodes contain. Levels are based on the heights of nodes, defined as the length of the longest path from the node to a leaf of the tree, plus one. Therefore, all leaves are at height 1. We choose non-null nodes using three regimes: \textit{clustered}, \textit{intermediate}, and \textit{dispersed}. In each of the three regimes, we choose 50 leaf nodes to be non-null. We define these by taking the union of the leaf node descendants of a certain set of nodes, which we call \textit{anchor nodes}; for the clustered setting we choose one anchor node at depth 2 with 50 leaf descendants, for the intermediate setting we randomly choose 10 anchor nodes at depth 3 in different subtrees with 5 leaf descendants each, and for the dispersed setting we randomly choose a total of 50 leaf nodes (at depths 4 and 5, where depth is defined as the length of the path from the node to the root node of the corresponding subtree) as anchor nodes. As Figure~\ref{fig:ICD_simulation_graph_info} shows, the more clustered the non-null items, the more the number of non-null nodes decreases as level increases. We aggregate item-level p-values using the Fisher combination test, allowing us to capture the fact that signal may be weak in the leaves, but can strengthen at nodes higher up in the tree when these weak signals are aggregated. 

\begin{figure}[h!]
\centering
\includegraphics[width = 0.9\textwidth]{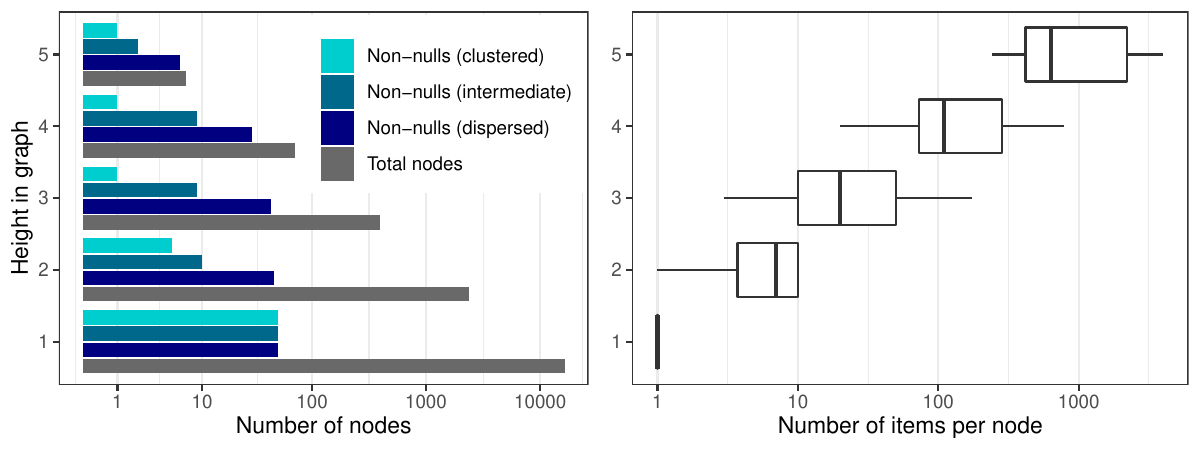}
\caption{Summary of the ICD tree graph used for simulations. Left: numbers of non-null nodes in each of the three configurations and total nodes at each level of the graph. Right: numbers of items per node at each level.}
\label{fig:ICD_simulation_graph_info}
\end{figure}

We compare Focused BH with the outer nodes filter to Yekutieli and Structured Holm, each at  nominal level 0.1, followed by outer node filtering. 
Note that this nominal level has different meaning across methods: for Structured Holm it is the target FWER before filtering, while for Focused BH and Yekutieli it is the target FDR after filtering. Figure~\ref{fig:PheWAS_experiment} shows the power and FDR of these methods as we vary the signal strength. We quantify power as the number of filtered discoveries, divided by the maximum possible number of such discoveries, averaged over 250 repetitions.  

\begin{figure}[h!]
\centering
\includegraphics[width = 0.85\textwidth]{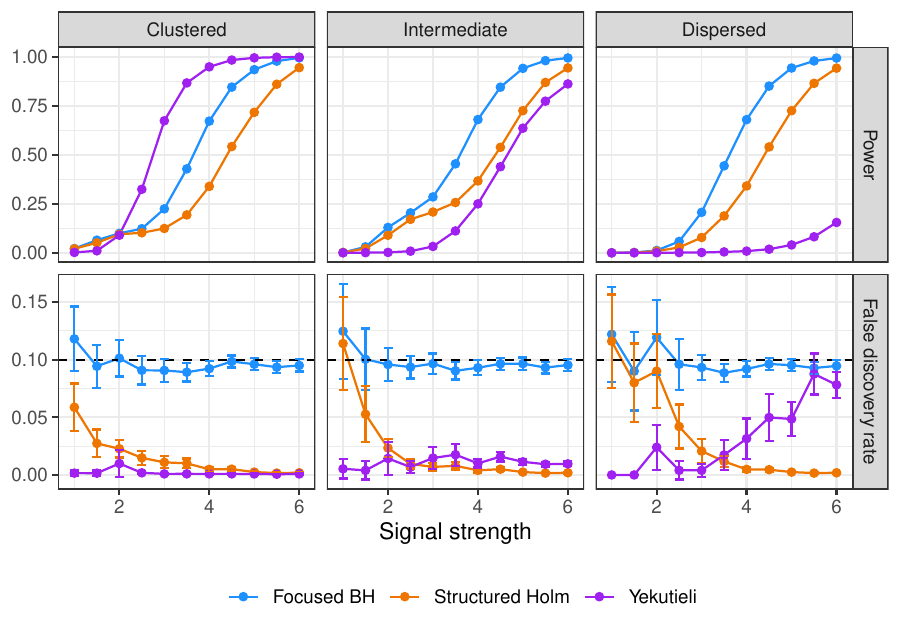}
\caption{Results of PheWAS simulation with outer nodes filter; each point is the average across 250 repetitions. The intervals for FDR have width twice the standard error. The dashed horizontal line shows the target nominal level of 0.1.}
\label{fig:PheWAS_experiment}
\end{figure}

All three methods successfully control the FDR, even though Yekutieli's method assumes independence across the entire tree, and Theorem~\ref{main_theorem} does not guarantee FDR control for Focused BH because it is not clear whether the PRDS assumption holds in this case. 
Across all settings, Focused BH outperforms Structured Holm; 
this is especially true when many discoveries can be made. Finally, we see that the performance of Yekutieli's method depends significantly on the configuration of the non-nulls. It outperforms Focused BH in the clustered setting, but performs worse in the intermediate and dispersed settings. By contrast, Focused BH is relatively more stable to the signal configuration, since the strength of the signal drives its rejection sets more than the graph structure. 

\subsection{Simulation with GO DAG}

This simulation is based on the real DAG of GO biological processes, restricted to the $m = 12409$ terms annotated with at most 100 genes. We make this restriction because the self-contained null hypotheses for GO terms containing too many genes are usually false, but uninterestingly so. Figure~\ref{fig:GO_simulation_graph_info} contains information on the resulting DAG structure, which has several root nodes. We choose non-null nodes in a similar manner to the PheWAS simulation. We choose a set of genes to be non-null, and all nodes containing any non-null genes themselves are non-null. We must be more parsimonious with the number of genes where we plant signals in order to limit the number of non-null nodes, given the more interconnected nature of GO. Therefore, we randomly choose two anchor GO terms among those annotated with five genes, and consider the resulting ten genes non-null. Given the larger number of items per node in the GO DAG, 
we may expect the alternatives within nodes to be sparser and therefore more powerfully detected using the Simes test \cite{S86}. 

\begin{figure}[h!]
\centering
\includegraphics[width = 0.9\textwidth]{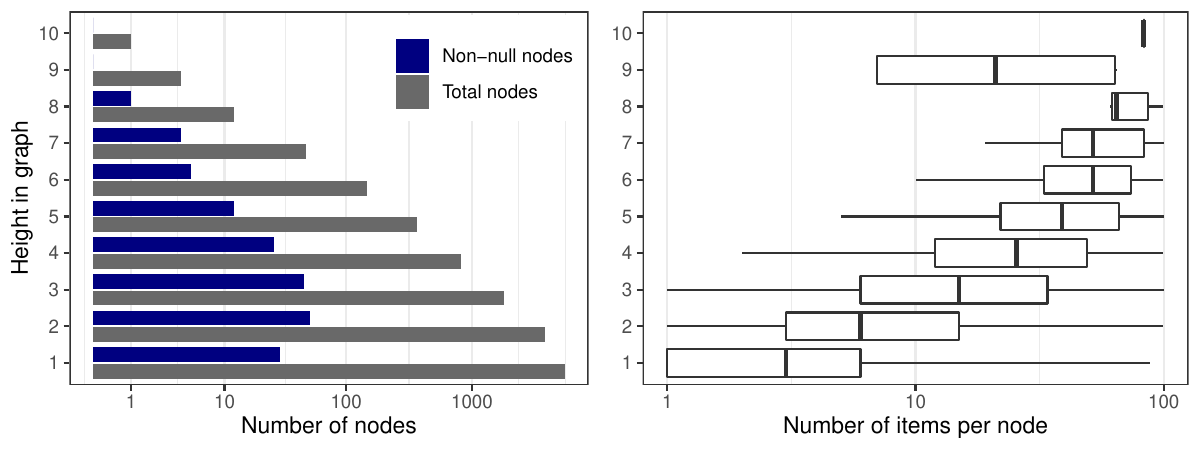}
\caption{Summary of GO subgraph used for simulations; details as in Fig.~\ref{fig:ICD_simulation_graph_info}. }
\label{fig:GO_simulation_graph_info}
\end{figure}

We exclude Yekutieli from the GO simulation because it does not apply to non-tree DAGs. On the other hand, we add the resampling-based version of Focused BH introduced in Section~\ref{sec:improving}. We generate the estimate $\widehat V$ by resampling from the global null distribution, which in this case is known exactly. Figure~\ref{fig:permutation_estimates} shows these estimates $\widehat V$ for the GO and ICD graphs (normalized by their respective numbers of nodes for direct comparison) as well as the original linear estimate. We can see that the resampling estimate is further away from the linear estimate for the GO graph with REVIGO than for the ICD graph with outer nodes filter. This fact, due in part to the fact that REVIGO generally reduces the rejection set by a larger factor, suggests the resampled Focused BH will make more of a power improvement in this simulation and is why we excluded it from the previous one. 

Figure~\ref{fig:REVIGO_experiment} shows the results of the simulation. We see some of the same trends as in Figure~\ref{fig:PheWAS_experiment}, such as the conservative behavior of Structured Holm. 
The resampled version of Focused BH successfully controls the FDR across all signal strengths and improves on the power of the original method, the latter as suggested by Figure~\ref{fig:permutation_estimates}. Note that the original version of Focused BH controls the FDR, though it is unclear if REVIGO is monotonic.

\begin{figure}[h!]
\centering
\includegraphics[width = 0.75\textwidth]{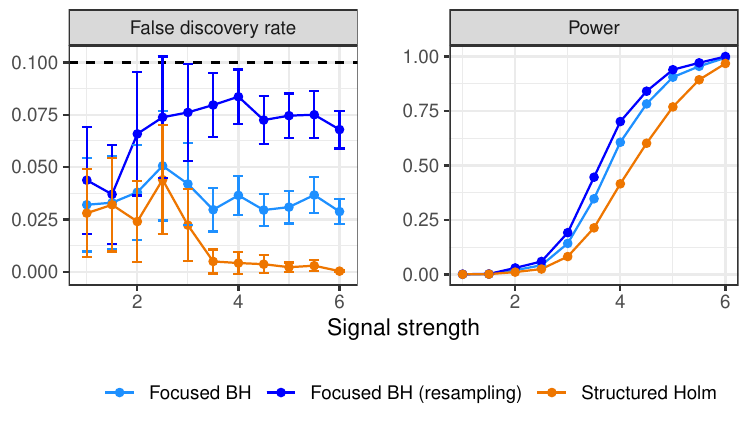}
\caption{Results of GO simulation with REVIGO filter; details as in Figure~\ref{fig:PheWAS_experiment}.}
\label{fig:REVIGO_experiment}
\end{figure}

In summary, these two numerical experiments show that Focused BH and its resampling-based improvement enjoy good power across a variety of simulation settings, usually outperforming existing alternatives. This is not universally true, however, as we saw in the clustered setting in Figure~\ref{fig:PheWAS_experiment}, where the Yekutieli method outperforms the others. These experiments also demonstrate that the FDR control of Focused BH extends beyond the sufficient conditions we provide in Theorem~\ref{main_theorem}, echoing a property of the BH procedure. 

Finally, Figure~\ref{fig:BH_experiments} compares the performance of Focused BH to that of naive non-hierarchical methods discussed in the introduction in the same simulation settings as above. In particular, we apply BH to all nodes at level $q$ followed by a filtering step. Additionally, for the PheWAS simulation, we apply BH at level $q$ to only the leaf nodes, only the nodes one level above leaf nodes, and only the nodes two levels above leaf nodes. Note that the level-specific methods are guaranteed to control the FDR after outer nodes filtering, since this filter leaves the rejection set unchanged. As anticipated, both filters cause BH to lose FDR control, reinforcing our message that it is insufficient to apply traditional FDR procedures in the presence of filtering. For the PheWAS simulation, the power of Focused BH essentially matches that of BH, so it effectively prunes out false positives without removing many true positives. In the GO simulation, there is a greater gap between the power of Focused BH and BH, but this gap is narrowed by the resampling estimate. While level-specific BH does control the FDR, Figure~\ref{fig:BH_experiments} shows that it can have very low power if the level is chosen poorly, while Focused BH chooses the levels of its discoveries adaptively. 


\section{Applications} \label{sec:data_analysis}

\subsection{UK Biobank PheWAS} \label{sec:phewas}

We perform a phenome-wide association analysis on the UK Biobank data \cite{BetJ18}, which contains genotype and electronic health record information for 502,616 UK residents. A set of ICD-10 disease codes has been extracted from each individual's health record. The genetic variant of interest is HLA-B*27:05, which codes for a sub-type of the human leukocyte antigen called HLA-B*27. HLA-B*27:05 is famous for its association with ankylosing spondylitis, and is known to be associated with many other diseases as well. A PheWAS for this variant was conducted previously by \cite{CetM17},  though it is not directly comparable with Focused BH because it relies on a Bayesian analysis framework.

To obtain a p-value for each ICD-10 code, we perform a chi-squared test of independence between the disease and the genotype variable (which takes values in $\{0,1,2\}$). Note that this is not a self-contained test, and the logical relationships~(\ref{logical_relationships}) no longer hold. Furthermore, these hypotheses are generally harder to reject than self-contained ones. Many of the ICD-10 terms correspond to diseases that are rare, and there is little power to reject them. Thus, we first restricted our attention to the $m = 3265$ diseases observed at least 50 times in the UK Biobank data set (see Figure~\ref{fig:ICD_analysis_graph_info}). This filtering step is benign and need not be accounted for because it does not involve the response variable (the genotype). 

We compare the outer node discoveries of the following  methods: 
Structured Holm (SH), Yekutieli (Y), Focused BH (FBH), BH, Leaf BH (LBH), and Holm (H). 
The first three are those compared in Section \ref{sec:ICD_experiment}. 
Leaf BH is the naive method where we apply BH only to the leaf nodes of the ICD-10 tree (after having removed the rare diseases). One might hope that we have decent power to find associations with diseases with at least 50 cases. Note that Structured Holm may not be valid due to its reliance on logical relationships, so we also apply the Holm method for FWER control, which does not rely on this assumption. We apply each method at a nominal level of 0.05.

Figure~\ref{fig:UKBB_overview} summarizes the number of findings of each method (left) and their overlaps (right). Holm has the same rejections as Structured Holm. Aside from BH, which we do not expect to control the outer nodes FDR, Focused BH has the most findings (24). All but one of the discoveries of Leaf BH, Structured Holm, or Yekutieli were also discovered by Focused BH, while four of the Focused BH discoveries were not found by any of these methods, as we see from the Venn diagram in this figure. While the proper validation of these discoveries is beyond the scope of this work, we present some evidence in the medical literature for these associations in Section \ref{sec:UKBB_discoveries} of the supplement. Recall that the ICD is broken down into 22 subtrees, each representing a major class of diseases. Figure~\ref{fig:UKBB_overview} also shows that Focused BH finds associations with HLA-B*27:05 in 10 of these 22 disease classes, demonstrating that this variant has effects on a variety of diseases. On the other hand, Leaf BH, Structured Holm, and Yekutieli find effects in only 8,6, and 1 disease classes, respectively. We can explain the low power of the Yekutieli method by recalling from the previous section that it performs better in regimes where signals are clustered together. In this data, clearly the signal is dispersed throughout the subtrees, see the table in Figure~\ref{fig:UKBB_overview}.

Figure~\ref{fig:UKBB_MSK} gives a closer look at the findings in the subtree of musculoskeletal diseases. In this subtree, Focused BH finds seven associations. Four of these are also discovered by Structured Holm and Leaf BH, one is discovered by Structured Holm but not Leaf BH, one is discovered by Leaf BH but not Structured Holm, and one is not discovered by either. The two not discovered by Structured Holm are the weakest two signals, a consequence of this method's more stringent error control. The two not discovered by Leaf BH are not leaf nodes; indeed, outer node discoveries can be internal nodes of the set of tested hypotheses. 

\begin{figure}[h!]
\centering
$\vcenter{
\hbox{
	\includegraphics[width = 0.5\textwidth]{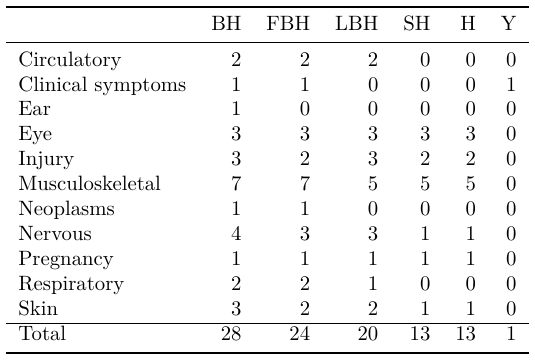}		
}}$
$\vcenter{
	\hbox{
		\includegraphics[width = 0.35\textwidth]{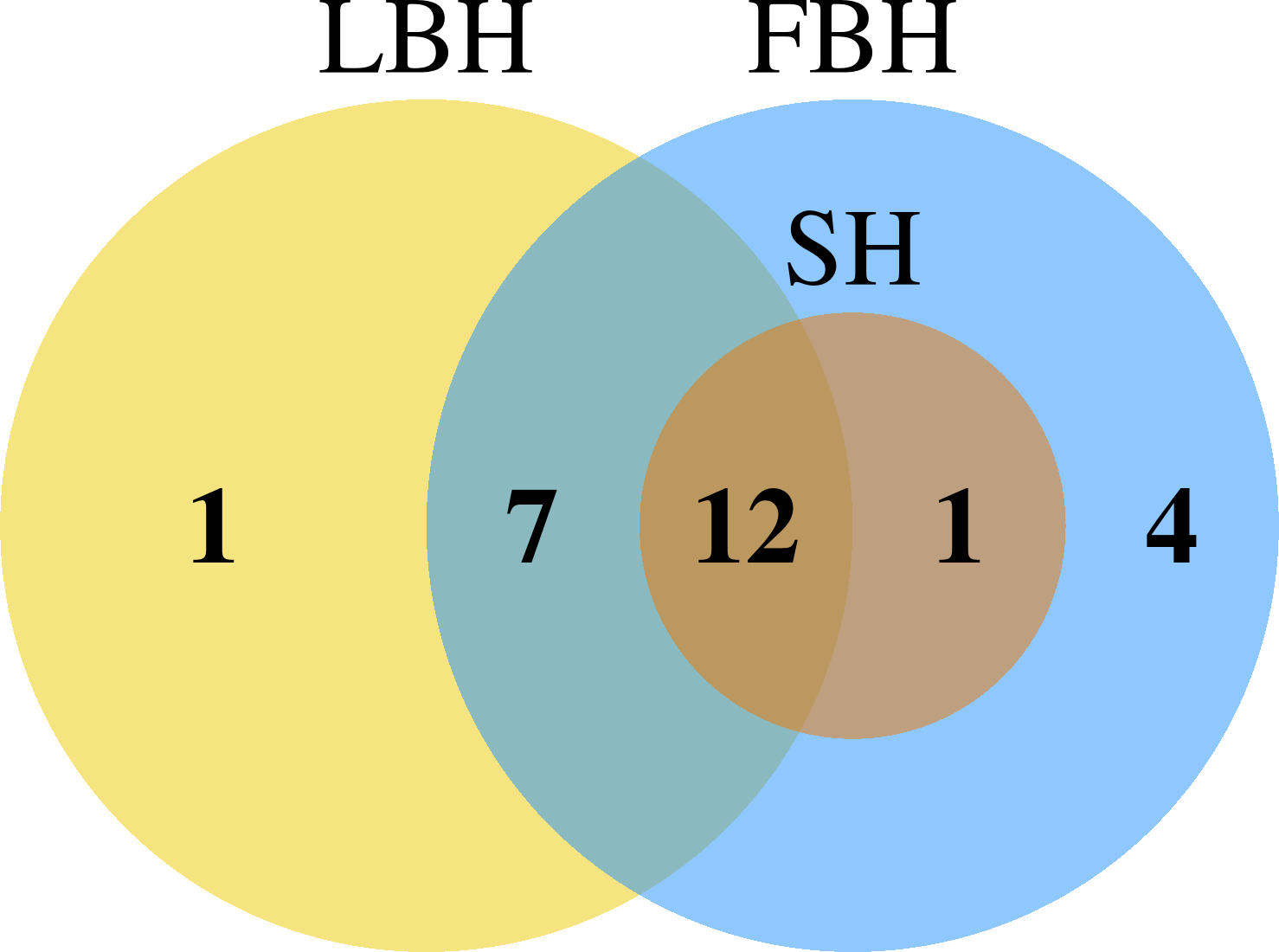}		
	}}$
	\caption{Associations found in PheWAS analysis; the table (left) shows the breakdown by disease category and the Venn diagram (right) shows intersections among all discoveries of three methods.}
	\label{fig:UKBB_overview}
\end{figure}

\begin{figure}[h!]
	\centering
	\includegraphics[width = 0.55\textwidth, valign=c]{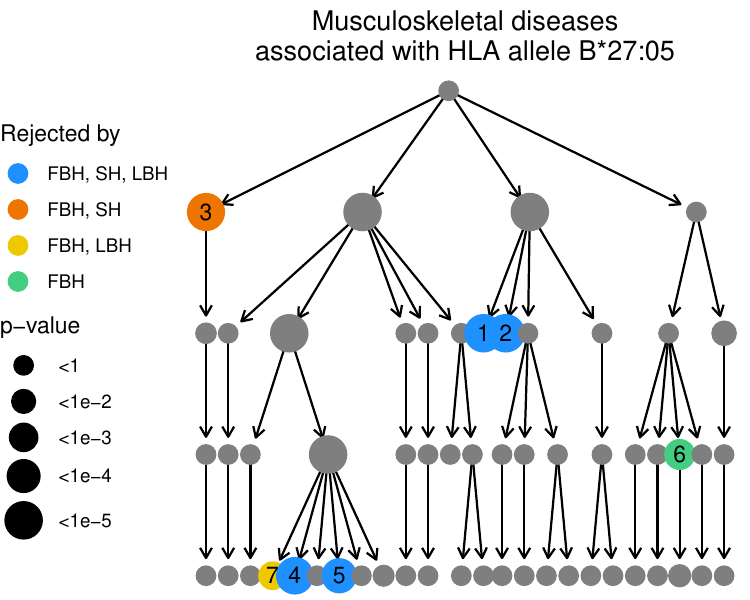}
	\includegraphics[width = 0.42\textwidth, valign=c]{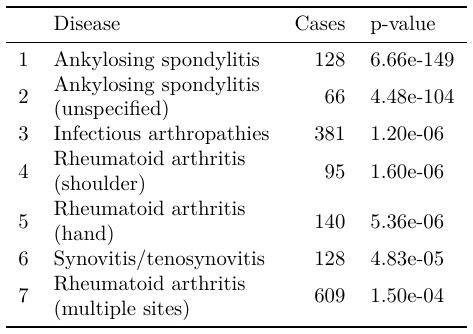}
	\caption{Musculoskeletal diseases found in PheWAS analysis. Associations are shown in the context of the ICD-10 graph (left), and listed individually (right). Non-discovered nodes are subsampled for visualization purposes.}
	\label{fig:UKBB_MSK}
\end{figure}

\subsection{GO enrichment analysis}

Next, we analyze data from a gene expression experiment to study the difference between breast cancer patients who remained cancer-free for five years after treatment and those who did not \cite{VetW02}. By cross-referencing the differentially expressed genes with biological processes, the goal is to discover processes contributing to breast cancer treatment outcome.

For this analysis, we used two of the most common tools in GO enrichment analysis: GOrilla \cite{EetY09} to compute enrichment p-values from a list of differentially expressed genes and REVIGO \cite{SetS11} for filtering. In fact, GOrilla actually links to REVIGO in order to help users filter the lists of GO terms it outputs. In more detail, we started with a list of $K = 9113$ genes (ordered according to their differential expression in the breast cancer data set) that is available on the GOrilla website, where this data is used as the ``running example." We applied GOrilla with default settings to run the enrichment analysis for this ordered gene list (using the mHG statistic \cite{EetY09}) on $m = 14016$ terms from the ``Biological Process" sub-ontology of GO (see Figure~\ref{fig:REVIGO_analysis_graph_info}). GOrilla runs what \cite{GB07} call a \textit{competitive test}, to see if the genes in a GO term are distributed nearer the top of the ordering by differential expression than expected by random chance. Competitive tests are more often used in practice but unlike self-contained tests, do not imply logical relationships. This yielded enrichment p-values for each GO term, which we downloaded from GOrilla. We then applied BH, Focused BH, Structured Holm, and Holm to these enrichment p-values, each with nominal level $q = 0.1$.
\begin{table}[h!]
	\centering
	\includegraphics[width = 0.7\textwidth]{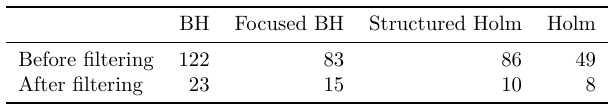}
	\caption{Numbers of discoveries for GO enrichment data, before and after filtering.}
	\label{tab:REVIGO}
\end{table}

Table~\ref{tab:REVIGO} shows the numbers of discoveries made by each of these three methods. BH made the most filtered discoveries (23), though Figure~\ref{fig:BH_experiments} suggests these may not be trustworthy. Focused BH makes 15 discoveries, while Structured Holm makes 10. Interestingly, Structured Holm actually makes more discoveries than Focused BH before filtering. This occurs because Structured Holm augments the rejection set with all of its ancestors, but most of these are redundant and therefore filtered out by REVIGO. 

\begin{figure}[h!]
	\centering
	\includegraphics[width = 0.55\textwidth, valign=c]{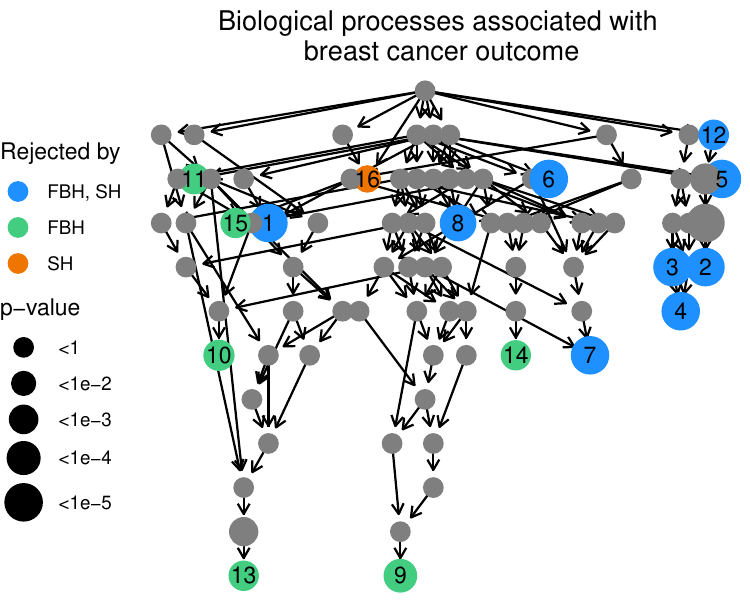}
	\includegraphics[width = 0.42\textwidth, valign=c]{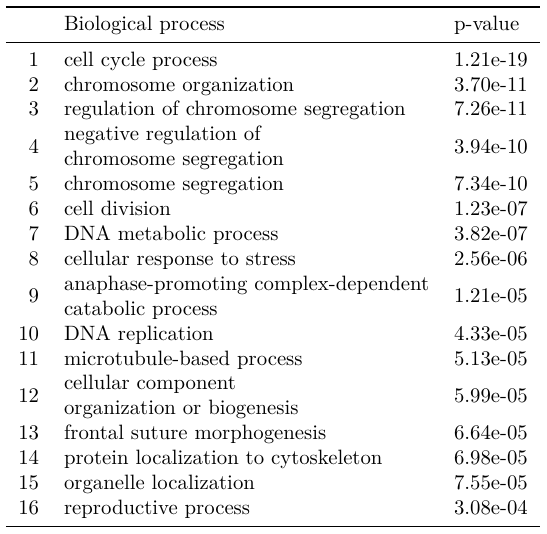}
	\caption{GO enrichment analysis results.}.
	\label{fig:GO}
\end{figure}
Figure~\ref{fig:GO} shows the 16 terms discovered by Focused BH or Structured Holm in their graph context, and lists them along with their p-values. We see that the rejected terms are reasonably separated in the graph thanks to the filter. The terms rejected by Focused BH but not Structured Holm generally have weaker signals, as we saw in the PheWAS data analysis. As an exception, Structured Holm finds an association with the term ``reproductive process" (16) that Focused BH does not, even though it has a relatively weak p-value. This is due to Structured Holm's leveraging of logical relationships, allowing nodes with weaker p-values to borrow power from nodes they are connected to with stronger p-values. 

\section{Discussion} \label{sec:discussion}

We have introduced Focused BH, a method accommodating a variety of pre-specified filters while guaranteeing FDR control. Our 
simulations in Section~\ref{sec:experiments} showed that Focused BH performs quite well in comparison to available methodologies. The method is also very flexible, allowing easy integration with filters arising in a variety of application domains. We saw this in Section~\ref{sec:data_analysis}, where we applied Focused BH seamlessly to two real applications with different graph structures and different filters. It is likely to be applicable to other problems involving hierarchically structured hypotheses, %
like microbiome analysis, where bacterial species are organized into phylogenetic trees, or variable selection, which has been posed as a hierarchical multiple testing problem \cite{M08}, with nodes representing correlated groups of variables of different sizes. 
Our theoretical guarantees cover a reasonably broad range of filters and our simulations underscore the robustness of our method: users can expect FDR control even in certain settings which are beyond our theoretical results. 

Focused BH, though already very general, can be extended in various ways. For example, filtering operations can extend beyond subsetting. We may also consider \textit{prioritization} operations, where elements of $\cR^*$ obtain ``prioritization weights" reflecting the relative importance of these discoveries. For example, more interesting leads might be followed up with more resources like time or money. Furthermore, the relative importance of discoveries may be more accurately assessed \textit{after} looking at the rejection set. Therefore, in supplementary Section~\ref{sec:app_fractional}, we introduce a more general definition of filtering that encompasses any such subsetting and prioritization operations. Focused BH and its theoretical guarantees easily generalize to this broader class of filters. This puts Focused BH in the realm of weighted hypothesis testing, but unlike previous works on this subject, the hypothesis weights need not be fixed ahead of time and may be determined by the data. In addition, Focused BH can be extended to adapt to the non-null proportion, to handle arbitrarily dependent p-values, or to handle multiple filters; see supplementary Section~\ref{sec:extensions}. 

\subsection{Implications beyond hierarchical testing and future work} \label{sec:implications_beyond}

While hierarchical applications motivated our development of Focused BH, and are among its most promising applications, 
the generality of our methodology leads to implications beyond hierarchical multiple testing. Other kinds of structure present in testing problems can be tackled with Focused BH as well. 
For example, in applications with spatial structure, rejections too near each other might be considered redundant. This is the case in Genome-Wide Association Studies (GWAS), which test the associations between one phenotype and millions of genetic variants. Rather than reporting all discoveries, scientists identify clusters of rejected variants residing in the same genomic region and report only the variant among these with the smallest p-value. This is also a filtering operation known to inflate FDR \cite{SetY11,BetS17a}. In supplementary Section~\ref{sec:GWAS_simulation}, we demonstrate via simulation how Focused BH restores FDR control for GWAS. Similar issues arise in imaging applications \cite{PetW04, BH07, SetS15}. 

More generally, Focused BH can be viewed as a means to control the FDR while enforcing arbitrary structural constraints. In fact, there are close connections between filtering and structural constraints, if we view filters as ``projections" of arbitrary rejection sets onto those satisfying structural constraints; see supplementary Section~\ref{sec:supp_structured} for more discussion.

Focused BH is also a form of \textit{selective inference}, since it provides valid inference despite a selection step. Some such methodologies \cite{BB14, Taylor2015, BetS17a} conduct inference on the set of hypotheses surviving a pre hoc screening procedure; others account for post hoc selection rules through simultaneous inference \cite{GS11mt, BetZ13, Katsevich2020, GillesBlanchard2019}. Focused BH is somewhere in between: the selection (i.e. the filter) is applied to the output of a multiple testing procedure rather than to its input, but it must be specified in advance. Focused BH reduces to the method of \cite{BetS17a} when $\F$ is a screening filter, but goes beyond it by allowing filters that operate on rejection sets, like the outer nodes filter or REVIGO.


There are several directions for extending the work presented in this paper. First, our numerical simulations suggest that the resampling-based version of Focused BH controls the FDR while improving power for filters that reduce the number of rejected nulls. Therefore, developing theory for this methodology would be a worthwhile endeavor; supplementary Section~\ref{sec:supp_perm} may provide a useful starting point. Second, it would be interesting to see if the advantages of Focused BH can be combined with the abilities of Structured Holm and Yekutieli to leverage logical relationships and non-uniform signal distributions, respectively, 
to get a more specialized procedure with higher power. Finally, it would be exciting to apply Focused BH to problems in other applied domains, hierarchical or otherwise.

\section{Acknowledgements}
We thank V. Griskin, J. Zhu, E. Candes, M. Sesia, A. Weinstein, A. Ramdas, D. Yekutieli, M. Celentano, and A. Katsevich for helpful discussions. We also thank G. Bejerano and Y. Tanigawa for their help with the application to GO enrichment analysis  and A. Solari for his comments on an earlier version of the manuscript.

\bigskip
\begin{center}
	{\large\bf SUPPLEMENTARY MATERIAL}
\end{center}

\begin{description}
	
	\item[Supplementary text:] Supplementary discussion and proofs of all results. (.pdf file)
	
\end{description}

\bibliographystyle{unsrt}


\appendix

\section{Proofs of main results} \label{sec:supp_proofs}

\subsection{FDR control lemma}

In this section, we state a useful lemma, inspired by \cite{BR08, BR07}, from which we can deduce all of our theoretical results. It brings together several known results, and also contains a novel extension. First, we introduce a few definitions.

\begin{definition}[\cite{BR07}] \label{def:thresh_collection}
	Let $\beta: \br^+ \rightarrow \br^+$ be a non-decreasing function, called the shape function, let $G:[0,1]^m \rightarrow (0, \infty)$ be an estimator of the quantity $\frac{m}{m_0}$, and let $q \in (0,1)$ be a target FDR level. Then, the function
	\begin{equation}
	\Delta(\bm p, r) = \frac{q}{m}\beta(r)G(\bm p)
	\label{thresh_collection}
	\end{equation}
	is an adaptive threshold collection with respect to $\beta, G, q, m$.
\end{definition}

To prove FDR statements involving the adaptive estimator $G$, \cite{YetY06, BR07} require the following property:
\begin{equation}
\mathbb E[G(\bm p_{0, j})] \leq \frac{m}{m_0} \  \text{ for all } j \in \mathcal H_0, \text{ where }\ \bm p_{0,j} \equiv (p_1, \dots, p_{j-1}, 0, p_{j+1}, \dots, p_m).
\label{m_0_hat_criterion}
\end{equation}
In other words, $G$ must be a conservative estimate of the null proportion.

\begin{definition}[\cite{BR07}] \label{def:SC}
	A multiple testing procedure $\bm p \mapsto \cU^*(\bm p)$ is self-consistent with respect to the adaptive threshold collection $\Delta(\bm p, r)$ if 
	\begin{equation}
	p_j \leq \Delta(\bm p, |\cU^*(\bm p)|) \quad \text{for all } j \in \cU^*(\bm p).
	\tag{SC}	
	\label{SC}
	\end{equation}
\end{definition}


Definitions~\ref{def:thresh_collection} and \ref{def:SC} were proposed by \cite{BR07}, generalizing the corresponding definitions in \cite{BR08} with $G(\bm p) = 1$. While only these original definitions are necessary to prove FDR control for Focused BH, these more general definitions will also allow us to prove FDR control for the adaptive extension presented in Section~\ref{sec:Storey}.

\begin{definition}[\cite{BB14}] \label{def:simple}
	A procedure $\bm p \mapsto \cU^*(\bm p)$ is simple if for any $j \in \cU^*(\bm p)$, $|\cU^*(\bm p)|$ remains unchanged if $p_j$ is varied while $j \in \cU^*(\bm p)$.
\end{definition}

Definition~\ref{def:simple} is a slight variation of that proposed by \cite{BB14}.  We are now ready to state the following lemma, which gives sufficient conditions for a self-consistent multiple testing procedure to control the FDR. Part (iv) of the lemma is novel, while all other parts are known.

\begin{lemma} \label{lem:FDR_control}
	Let $\cU^*$ be a self-consistent multiple testing procedure with respect to an adaptive threshold collection~\eqref{thresh_collection}. $\cU^*$ controls the FDR at level $q$ if any of the following conditions holds, assuming that the p-values $\bm p$ are valid in (i), (iii), (iv), (v):
	\begin{enumerate}
		\item[(i)] $\bm p$ are PRDS, the mapping $\bm p \mapsto |\cU^*(\bm p)|$ is coordinate-wise nonincreasing, $G(\bm p) = 1$, and $\beta(r) = r$;
		\item[(ii)] $\bm p$ are p-values for self-contained hypotheses obtained from valid item-level p-values $\bm p^{\text{item}}$ via the Simes test (recall Section~\ref{sec:setup_hierarchical}), $\bm p^{\text{item}}$ are PRDS, the mapping $\bm p^{\text{item}} \mapsto |\cU^*(\bm p^{\text{item}})|$ is non-decreasing, $G(\bm p) = 1$, and $\beta(r) = r$.
		\item[(iii)] For all $j \in \mathcal H_0$, $p_j$ is independent of $\bm p_{-j},$  the mappings $\bm p \mapsto |\cU^*(\bm p)|$ and $\bm p \mapsto G(\bm p)$ are coordinate-wise nonincreasing, $G$ satisfies \eqref{m_0_hat_criterion}, and $\beta(r) = r$;
		\item[(iv)] For all $j \in \mathcal H_0$, $p_j$ is independent of $\bm p_{-j},$   $\cU^*$ is simple, $G(\bm p) = G(\bm p_{0, j})$ for all $j \in \cU^*(\bm p)$, $G$ satisfies \eqref{m_0_hat_criterion}, and $\beta(r) = r$;
		\item[(v)] $G(\bm p) = 1$, and $\beta(r) = \int_0^r x d\nu(x)$ for some probability measure $\nu$ on $\br^+$.
	\end{enumerate}
\end{lemma}
\begin{proof}
	Part (i) follows directly from Propositions 2.7 and 3.6 of \cite{BR08}, part (ii) follows easily from part (b) of Lemma 2 of \cite{RetJ17} and Proposition 2.7 of \cite{BR08}, part (iii) is the content of Theorem 11 of \cite{BR07}, and part (v) follows directly from Propositions 2.7 and 3.7 of \cite{BR08}. 
	
	It remains to prove part (iv). To this end, first fix $j \in \mathcal H_0$. For fixed $\bm p^0_{-j}$, define the event
	\begin{equation}
	\mathcal E(\bm p^0_{-j}) = \{\bm p: \bm p_{-j} = \bm p^0_{-j}, j \in \cU^*(\bm p)\}.
	\label{simple-event}
	\end{equation}
	By the simpleness criterion, the quantity $|\cU^*(\bm p)|$ is constant for $\bm p \in \mathcal E(\bm p^0_{-j})$. By slight abuse of notation, let $|\cU^*(\bm p^0_{-j})|$ denote this constant value, setting it to zero if $\mathcal E(\bm p_{-j}^0) = \varnothing$. Also note that $G(\bm p) = G(\bm p_{0, j})$ for all $j \in \cU^*(\bm p)$, by assumption. Then, we have
	\begin{equation*}
	\begin{split}
	\EE{\frac{\ind(j \in \cU^*(\bm p))}{|\cU^*(\bm p)|}} &= \EE{\frac{\ind \left(\mathcal E(\bm p_{-j})\right)}{{|\cU^*(\bm p_{-j})|}}\ind\left(p_j \leq \frac{q}{m}|\cU^*(\bm p_{-j})|G(\bm p_{0, j})\right)} \\
	&= \EE{\frac{\ind(\mathcal E(\bm p_{-j}))}{{|\cU^*(\bm p_{-j})|}}\PPst{p_j \leq \frac{q}{m}|\cU^*(\bm p_{-j})|G(\bm p_{0, j})}{\bm p_{-j}}} \\
	&\leq \EE{\frac{\ind(\mathcal E(\bm p_{-j}))}{|\cU^*(\bm p_{-j})|}\frac{q}{m}|\cU^*(\bm p_{-j})|G(\bm p_{0, j})} \\
	&= \frac q m \cdot \EE{\ind(\mathcal E(\bm p_{-j}))G(\bm p_{0,j})} \\
	&\leq \frac q {m_0}.
	\end{split}
	\end{equation*}
	The first equality follows because $\cU^*$ is simple and self-consistent, the second equality by the tower property, the first inequality by the independence assumption and the assumption that p-values are valid, and the second inequality by assumption~\eqref{m_0_hat_criterion}. We conclude that
	\begin{equation*}
	\text{FDR}(\cU^*) = \EE{\frac{\sum_{j \in \mathcal H_0} \ind(j \in \cU^*(\bm p))}{|\cU^*(\bm p)|}} = \sum_{j \in \mathcal H_0} \EE{\frac{\ind(j \in \cU^*(\bm p))}{|\cU^*(\bm p)|}} \leq \sum_{j \in \mathcal H_0} \frac{q}{m_0} = q.
	\end{equation*}
	
\end{proof}
In the following section, we show how Theorems \ref{main_theorem} and \ref{secondary_theorem}, and Corollary~\ref{outer_nodes_corollary}, follow from Lemma~\ref{lem:FDR_control}. 

\subsection{Proofs that Focused BH controls the FDR} \label{sec:supp_FBH_proofs}

\begin{proof}[Proof of Theorem~\ref{main_theorem}]
	
	Let $\bm p \mapsto \cU^*(\bm p)$ represent Focused BH. First, let us establish that Focused BH is a self-consistent procedure with respect to the threshold collection $\Delta(\bm p, r) \equiv \frac{q}{m}r$. Indeed, let $j \in \cU^*$. Then, by the definition of the procedure, 
	\begin{equation}
	\fdphat(t^*) = \frac{m \cdot t^*}{|\cU^*|} \leq q \quad \Longrightarrow \quad p_j \leq t^* \leq \frac{q}{m}|\cU^*|.
	\label{self-consistency-verification}
	\end{equation}
	
	Next, we show that the assumptions in parts (i) or (ii) of Theorem~\ref{main_theorem} are sufficient for the conditions of Lemma~\ref{lem:FDR_control} to hold.
	
	\paragraph{Proof of part (i).} 
	
	By part (i) of Lemma~\ref{lem:FDR_control}, it suffices to show that the mapping $\bm p \mapsto |\cU^*(\bm p)|$ is coordinate-wise nonincreasing. To verify this property, suppose $\bm p^1 \leq \bm p^2$, and let $t_1^*$ and $t_2^*$ be the corresponding p-value cutoffs for Focused BH.
	Let 
	\begin{equation}
	t^{**}_2 = \max\{t \in \{0, p^1_1, \dots, p^1_m\}: t \leq t^*_2\}.
	\label{t2**}
	\end{equation}
	Then, $\cR(t_2^{**}, \bm p^1) = \cR(t_2^{*}, \bm p^1) \supseteq \cR(t_2^*, \bm p^2)$, so
	\begin{equation}
	\fdphat(t^{**}_2, \bm p^1) = \frac{m \cdot t^{**}_2}{|\F(\cR(t^{**}_2, \bm p^1), \bm p^1)|} \leq \frac{m \cdot t^{**}_2}{|\F(\cR(t^*_2, \bm p^2), \bm p^2)|} \leq \frac{m \cdot t^{*}_2}{|\F(\cR(t^*_2, \bm p^2), \bm p^2)|} \leq q,
	\label{monotonicity_argument}
	\end{equation}
	where the first inequality follows by monotonicity of $\F$ and the third by the definition of Focused BH. Therefore, $t^*_1 \geq t^{**}_2$, from which it follows that $\cR(t_1^{*}, \bm p^1) \supseteq \cR(t_2^{**}, \bm p^1) = \cR(t_2^{*}, \bm p^1) \supseteq \cR(t_2^*, \bm p^2)$. By monotonicity it follows that $|\F(\cR(t^*_1, \bm p^1), \bm p^1)| \geq |\F(\cR(t^*_2, \bm p^2), \bm p^2)|$. Therefore, $\bm p \mapsto |\cU^*(\bm p)|$ is indeed non-increasing in each component of $\bm p$.
	
	\paragraph{Proof of part (ii).} By part (ii) of Lemma~\ref{lem:FDR_control}, it suffices to show that the mapping $\bm p^{\text{item}} \mapsto |\cU^*(\bm p^{\text{item}})|$ is coordinatewise non-increasing. This is clear because the mapping $\bm p^{\text{item}} \mapsto \bm p$ is coordinatewise non-decreasing (a property of the Simes test) and the mapping $\bm p \mapsto |\cU^*(\bm p)|$ is coordinatewise non-increasing (shown in the proof of part (i)).
	
	\paragraph{Proof of part (iii).} 
	
	First, suppose $\F$ is monotonic. Then, the statement becomes a special case of part (i). Next, suppose $\F(\cR, \bm p) = \F_0(\cR \cap \mathcal S(\bm p))$. By part (iv) of Lemma~\ref{lem:FDR_control}, it suffices to show that $\cU^*$ is simple. To verify this property, fix $j$ and let $\bm p^1$ and $\bm p^2$ be p-value vectors differing only in their $j$th coordinate such that $j \in \cU^*(\bm p^1) \cap \cU^*(\bm p^2)$. Suppose without loss of generality that $t^*_1 \leq t^*_2$. Also, define $t_2^{**}$ via (\ref{t2**}) as before. Then, note that $\cR(t_2^{**}, \bm p^1) = \cR(t_2^{*}, \bm p^1) = \cR(t_2^*, \bm p^2)$. Also, $\mathcal S(\bm p^1) = \mathcal S(\bm p^2)$ because $\mathcal S$ is stable and $j \in \cU^*(\bm p^1) \cap \cU^*(\bm p^2) \subseteq \mathcal S(\bm p^1) \cap \mathcal S(\bm p^2)$. Therefore, 
	\begin{equation*}
	\begin{split}
	\fdphat(t_2^{**}, \bm p^1) = \frac{m \cdot t_2^{**}}{|\F(\cR(t_2^{**}, \bm p^1), \bm p^1)|} &= \frac{m \cdot t_2^{**}}{|\F_0(\cR(t_2^{**}, \bm p^1) \cap \mathcal S(\bm p^1))|} \\
	&= \frac{m \cdot t_2^{**}}{|\F_0(\cR(t_2^*, \bm p^2) \cap \mathcal S(\bm p^2))|} \\
	&= \frac{m \cdot t_2^{**}}{|\F(\cR(t_2^*, \bm p^2), \bm p^2)|} \leq \frac{m \cdot t_2^{*}}{|\F(\cR(t_2^*, \bm p^2), \bm p^2)|} = \fdphat(t_2^*, \bm p^2) \leq q,
	\end{split}
	\end{equation*} 
	where the first inequality holds because $t_2^{**} \leq t_2^*$ by construction. From this it follows that $t_2^{**} \leq t_1^* \leq t_2^*$, so $\cR(t_1^*, \bm p^1) = \cR(t_2^*, \bm p^1) = \cR(t_2^*, \bm p^2)$. Therefore, we find that $\F(\cR(t^*_1, \bm p^1), \bm p^1) = \F_0(\cR(t^*_1, \bm p^1) \cap \mathcal S(\bm p^1)) = \F_0(\cR(t^*_2, \bm p^2) \cap \mathcal S(\bm p^2)) = \F(\cR(t^*_2, \bm p^2), \bm p^2)$. Focused BH is thus a simple multiple testing procedure, which completes the proof.
\end{proof}

\begin{proof}[Proof of Corollary~\ref{outer_nodes_corollary}]
	By part (i) of Theorem \ref{main_theorem}, it suffices to check that $\F_{\mathcal T}$ is monotonic. To this end, suppose $\cR^1 \supset \cR^2$. We must show that $|\F_{\cT}(\cR^1)| \geq |\F_{\cT}(\cR^2)| $. Suppose first that $\cR^1$ is obtained from $\cR^2$ by adding one node $j$. If $j$ is not an outer node of $\cR^1$, then each of the outer nodes of $\cR^2$ are still outer nodes of $\cR^1$, in which case $|\F_{\mathcal T}(\cR^1)| = |\F_{\mathcal T}(\cR^2)|$. If $j$ is an outer node of $\cR^1$, then it can have at most 1 ancestor that is an outer node of $\cR^2$, since the graph is a tree. Hence, in this case either $|\F_{\mathcal T}(\cR^1)| = |\F_{\mathcal T}(\cR^2)|$ or $|\F_{\mathcal T}(\cR^1)| = |\F_{\mathcal T}(\cR^2)| + 1$. Having addressed the case $|\cR^1| = |\cR^2| + 1$, the general case follows by induction.
\end{proof}

\begin{proof}[Proof of Theorem~\ref{secondary_theorem}]
	As with the proof of Theorem \ref{main_theorem}, we rely on Lemma~\ref{lem:FDR_control}. The self-consistency of the Focused BH base procedure, which outputs $\cR(t^*, \bm p)$, with respect to $\Delta(\bm p, r) = \frac{q}{m}r$, is easily verified along the same lines as \eqref{self-consistency-verification}. Under the assumptions of part (i) or part (ii), monotonicity follows because as we showed before, $\bm p^1 \leq \bm p^2$ implies that $\mathcal R(t_1^*, \bm p^1) \supseteq \mathcal R(t_2^*, \bm p^2)$. Under the assumptions of part (iii), we claim that the Focused BH base procedure is simple. Indeed, fix $j$ and let $\bm p^1$ and $\bm p^2$ be p-value vectors differing only in their $j$th coordinate such that $j \in \cR(\bm p^1, t_1^*) \cap \cR(\bm p^2, t_2^*)$. Defining $t_2^{**}$ as before, we again have that $\cR(t_2^{**}, \bm p^1) = \cR(t_2^*, \bm p^2)$. By our assumption on $\F$, it follows that $|\F(\cR(t_2^{**}, \bm p^1), \bm p^1)| = |\F(\cR(t_2^*, \bm p^2), \bm p^2)|$, from which it follows as before that $\cR(t_1^*, \bm p^1) = \cR(t_2^*, \bm p^2)$. Therefore, the Focused BH base procedure is a simple multiple testing procedure, which completes the proof.
\end{proof}

\section{Discussion of resampling approach} \label{sec:supp_perm}

In this section, we describe a set of conditions under which the approximation~\eqref{permutation_derivation} holds, and therefore the permutation version of Focused BH is likely to succeed. 

The first key condition is that the distribution of the null p-values remains unchanged after permutation; i.e.
\begin{equation}
\{p_j\}_{j \in \mathcal H_0} \overset d = \{\tilde p_j\}_{j \in \mathcal H_0}, \label{A1}
\end{equation}
where $\{\tilde p_j\}$ are p-values derived from the permuted data set. This is Westfall and Young's subset pivotality property \cite{WY93}. It allows us to treat $\{\tilde p_j\}_{j \in \mathcal H_0}$ like samples from the true null distribution. See \cite{RY13} for a discussion of when subset pivotality holds in gene expression experiments of the kind outlined in Section~\ref{sec:improving} of the main text.

Secondly, suppose the inclusion of a null hypothesis in the filtered set at any fixed p-value threshold is a function only of $\{p_{j'}\}_{j' \in \cH_0}$:
\begin{equation}
\ind(j \in \cU(t, \bm p)) \text{ depends only on } \{p_{j'}\}_{j' \in \mathcal H_0} \text{ for each } j \in \cH_0. 
\label{A2}
\end{equation}
Taken together, assumptions \eqref{A1} and \eqref{A2} imply that
\begin{equation*}
\mathbb E\left[|\cU(t, \bm p) \cap \mathcal H_0|\right] = \EE{\sum_{j \in \mathcal H_0}\ind(j \in \cU(t, \bm p))} = \EE{\sum_{j \in \mathcal H_0}\ind(j \in \cU(t, \tilde{\bm p}))} = \mathbb E\left[|\cU(t, \tilde{\bm p}) \cap \mathcal H_0|\right],
\end{equation*}
which is the key step in the derivation~\eqref{permutation_derivation}. Therefore, under these two assumptions, we can conclude that $\widehat V^{\text{perm}}(t)$ is a conservative estimate of $V(t)$, i.e.
\begin{equation*}
\mathbb E[V^{\text{perm}}(t)] \geq \mathbb E[V(t)].
\end{equation*}
This fact by itself does not imply FDR control for the permutation procedure, but at least suggests FDR control might hold. 

Note that (\ref{A1}) is an assumption on the data-generating process and the permutation mechanism, and has been extensively studied. On the other hand, (\ref{A2}) is an assumption on the filter. For example, this assumption holds for the outer nodes filter if logical relationships (\ref{logical_relationships}) between hypotheses and their descendants hold. Indeed, these logical relationships in the DAG ensure that all null nodes are below all non-null nodes, so intuitively the outer nodes filter considers null nodes before non-null nodes. Formally, to prove (\ref{A2}) it suffices to observe that $j \in \mathcal U(t, \bm p)$ if and only if $p_j \leq t$ and $p_{j'} > t$ for each $j \leadsto j'$. If $j$ is null, then all its descendants $j'$ are null due to (\ref{logical_relationships}), and so the quantity $\ind(j \in \cU(t, \bm p))$ depends only on null p-values.

We leave the investigation of the theoretical properties of the permutation approach for future work.

\section{Other extensions of Focused BH} \label{sec:extensions}

\subsection{Focused Storey-BH} \label{sec:Storey}

Like BH, Focused BH may be improved by adaptively estimating and correcting for the proportion of non-nulls. Following \cite{Storey02, Storey04} a family of estimators of $m_0$ can be defined as:
\begin{equation*}
\widehat m_0^\lambda \equiv \widehat m_0^\lambda(\bm p) \equiv \frac{1 + |\{j: p_j > \lambda\}|}{1-\lambda}; \quad \lambda \in (0,1).
\end{equation*}
For a given $\lambda \in (0,1)$, we may replace (\ref{FDP_hat}) with
\begin{equation}
\fdphat^{\text{Storey}}(t) \equiv \frac{\widehat m_0^{\lambda} \cdot t}{|\F(\cR(t, \bm p), \bm p)|}.
\label{Storey-est}
\end{equation}
Like the Storey-BH method \cite{Storey04}, we require that $t^* \leq \lambda$. Therefore, we define this threshold as
\begin{equation}
t^* \equiv \max\{t \in \{0, p_1, \dots, p_m\} \cap [0,\lambda]: \fdphat^{\text{Storey}}(t) \leq q\}.
\label{t_star_Storey}
\end{equation}
The Focused Storey-BH procedure is defined by  replacing lines 2 and 4 in the Focused BH algorithm by \eqref{Storey-est} and \eqref{t_star_Storey}, respectively. FDR control for the usual Storey-BH procedure (without filtering) is proven only under independence \cite{Storey04}. We can show that under this assumption and for a broad class of filters, Focused Storey-BH also controls the FDR:
\begin{theorem} \label{storey-theorem}
	Focused Storey-BH controls the FDR under the assumptions of Theorem~\ref{main_theorem}, part (iii): $p_j \independent \bm p_{-j}$ for all $j \in \mathcal H_0$ and the filter is monotonic or of the form $\F(\cR, \bm p) = \F_0(\cR \cap \mathcal S(\bm p))$, where $\F_0$ is an arbitrary fixed filter and $\mathcal S$ is a \textit{stable} screening function.
\end{theorem}
\begin{proof}
	First, it is easy to verify that Focused Storey-BH is self-consistent with respect to the adaptive threshold collection 
	\begin{equation*}
	\Delta(\bm p, r) = \frac{q}{m}\cdot r\cdot\frac{m}{\widehat m_0^\lambda(\bm p)}.
	\end{equation*}
	
	We apply parts (iii) and (iv) of Lemma~\ref{lem:FDR_control} to prove the theorem. Note that criterion~\eqref{m_0_hat_criterion} was proved for $G(\bm p) = m/\widehat m_0^{\lambda}(\bm p)$ by \cite{YetY06}.
	
	Suppose $\F$ is monotonic. Part (iii) of Lemma~\ref{lem:FDR_control} shows that it is sufficient to show that Focused Storey-BH and the estimate $\widehat m_0^{\lambda}$ are monotonic. The proof of the latter is trivial, and the proof of the former follows from a similar argument to that of Theorem~\ref{main_theorem}, part (i).
	
	Suppose $\F(\cR, \bm p) = \F_0(\cR \cap \mathcal S(\bm p))$. Part (iv) of Lemma~\ref{lem:FDR_control} shows that it is sufficient to show that Focused Storey-BH is simple and that $\widehat m_0(\bm p) = \widehat m_0(\bm p_{0,j})$ for all $j \in \cU^*(\bm p)$. The latter statement is clear because $j \in \cU^*(\bm p)$ implies that $p_j \leq \lambda$, so setting $p_j$ to zero will not impact $\widehat m_0^\lambda$. Given this fact, a minor modification of the proof of Theorem~\ref{main_theorem}, part (iii) shows that $\cU^*$ is simple.	
\end{proof}

We remark that other estimates $\widehat m_0$ are possible; see \cite{BR07}. We leave the theoretical investigation of using these estimates with Focused BH for future work. 

\subsection{Focused Reshaped BH} \label{sec:reshaping}

BH is known to control the FDR under independence and positive dependence, but for a theoretical guarantee of FDR control under arbitrary dependence, an extra correction is required \cite{BY01}. This kind of correction can be formulated generally via the shape function \cite{BR08}
\begin{equation}
\beta(r) = \int_0^r x d\nu(x),
\label{reshaping}
\end{equation}
as described in part (v) of Lemma~\ref{lem:FDR_control}. Since $\beta(r) \leq r$, the idea is to use $\beta$ to \textit{undercount} the number of discoveries in the denominator of $\fdphat$ in order to make the procedure more conservative. This protects against adversarial dependency structures. For example, the measure $\nu$ placing masses in proportion to $\frac{1}{j}$ on each $j$ leads to $\beta(j) = \frac{j}{\sum_{i = 1}^m\frac{1}{i}}$. This shape function reduces to the Benjamini and Yekutieli's correction \cite{BY01}. Many other shape functions are possible; see \cite{BR08} for a discussion. We can extend these ideas to Focused BH as well: For any shape function $\beta$ of the form \eqref{reshaping}, Focused Reshaped BH is defined via
\begin{equation*}
\fdphat^{\beta}(t) \equiv \frac{m \cdot t}{\beta(|\F(\cR(t, \bm p), \bm p)|)},
\end{equation*}
keeping the other components of Focused BH the same. 
\begin{theorem} \label{reshaping-theorem}
	Focused Reshaped BH controls the FDR for arbitrary filters and valid p-values with arbitrary dependence structure.
\end{theorem}
\begin{proof}
	This statement follows from part (v) of Lemma~\ref{lem:FDR_control} once we note that Focused Reshaped BH is self-consistent with respect to the threshold collection ${\Delta(\bm p, r) = \frac{q}{m}\cdot \beta(r)}$.
\end{proof}
As with other procedures involving reshaping, Focused Reshaped BH might be very conservative. If p-value dependency is a major concern, a simpler solution might be to apply an FWER correction instead, after which arbitrary filtering may be done. Note that Focused Reshaped BH might or might not be more powerful than the FWER approach; e.g. see \cite{BY01} for a discussion of the power of their reshaped BH method.

\subsection{Multiple filters}

We may have multiple filters of interest, $\F_1, \dots, \F_L$. In this case, we may want a base procedure $\cM_0$ such that each composite procedure $\cM_\ell = \F_\ell \circ \cM_0$ has FDR control at pre-specified target level $q_\ell$, for $\ell = 1, \dots, L$. If $\mathcal M_0$ rejects the set $\cR(t^*, \bm p)$ for a threshold $t^*$, this amounts to
\begin{equation}
\text{FDR}_{\ell} \equiv \mathbb E[\text{FDP}(\F_\ell(\cR(t^*, \bm p), \bm p))] \leq q_\ell \quad \text{for each } \ell = 1, \dots, L,
\label{multi-filter-FDR}
\end{equation}
This criterion is similar to the multilayer FDR control criterion introduced by \cite{FBR15}. Note that Theorem \ref{secondary_theorem} (in conjunction with Theorem \ref{main_theorem}) implies that the Focused BH base procedure controls FDR with respect to the trivial filter ($\F_1$) and any monotonic filter $(\F_2$) under PRDS p-values. For a general set of filters, we can define the \textit{Multi-Focus BH} procedure by analogy to the p-filter \cite{FBR15} and multilayer knockoff filter \cite{katsevich17mkf}. Define
\begin{equation*}
t^* \equiv \max\{t \in \{0, p_1, \dots, p_m\}: \widehat{\text{FDP}}_\ell(t) \leq q_\ell \text{ for all } \ell\},
\end{equation*}
where $\widehat{\text{FDP}}_\ell(t)$ are defined as in (\ref{FDP_hat}), one for each filter. The Multi-Focus BH procedure outputs $L$ rejection sets, where each one is obtained by applying one of the pre-specified filters on the base rejection set $\cR(t^*, \bm p)$.
\begin{theorem} \label{multi-focus-bh}
	If each filter $\F_\ell$ is monotonic in $\bm p$ and the p-values are PRDS, then Multi-Focus BH satisfies \eqref{multi-filter-FDR}.
\end{theorem}
\begin{proof}
	The proof of this theorem is similar to that of Theorem \ref{main_theorem} part (i). By the same argument, it suffices to show that $\bm p^1 \leq \bm p^2$ implies that $\mathcal R(t_1^*, \bm p^1) \supseteq \mathcal R(t_2^*, \bm p^2),$
	and this fact is derived analogously to (\ref{monotonicity_argument}).
\end{proof}
An appealing approach may be to consider all the outputs of Multi-Focus BH, and report only one output which gives the most interesting results, e.g. the one corresponding to the largest, or the most easily interpretable, rejection set. This approach is not valid in the sense that we have no FDR control guarantees for a filter which is chosen among the $L$ filters based on the data. In order to allow a post-hoc choice of the filter, one could apply Multi-Focus BH at levels $q_1 = \cdots = q_L \equiv q/L$. Indeed, under the assumptions of   Theorem~\ref{multi-focus-bh}, we would have
\begin{equation*}
\EE{\max_{\ell=1, \ldots, L}\fdp (\F_{\ell}(\mathcal{R}(t^*, \bm p), \bm p))}\leq \sum_{\ell=1}^L \fdr_\ell \leq q;
\end{equation*}
i.e. Multi-Focus BH at level $q/L$ guarantees FDR control for any of its output rejection sets, even if it is chosen based on the data. A natural competitor of such procedure is a FWER-controlling procedure, which
allows any filtering of its rejection set, preserving FWER (and FDR) guarantees. If a practitioner is interested in FDR control, and knows in advance that only a small number of filters may be of interest, it may be preferable to apply Multi-Focus BH rather than a FWER-controlling procedure, especially if there are many signals.

\section{Hierarchical simulations and data analysis}

In this section, we present supplementary information on the numerical simulations and data analysis carried out in Sections~\ref{sec:experiments} and \ref{sec:data_analysis} of the main text, respectively.

\subsection{Supplementary figures}

Figure~\ref{fig:permutation_estimates} shows the estimates $\widehat V$ given in (\ref{BH_derivation}) in the main text for the GO and ICD graphs (normalized by their respective numbers of nodes for direct comparison) as well as the original linear estimate. Figure~\ref{fig:BH_experiments} compares the performance of Focused BH to that of methods not accounting for the hierarchical structure of the multiple testing problem in the ``intermediate" simulation setting of Section~\ref{sec:ICD_experiment} in the main text. Figures~\ref{fig:ICD_analysis_graph_info} and \ref{fig:REVIGO_analysis_graph_info} show information about the graphs used in the PheWAS and GO data analyses, respectively, of Section~\ref{sec:data_analysis}.

\begin{figure}[h!]
	\centering
	\includegraphics[width = 0.7\textwidth]{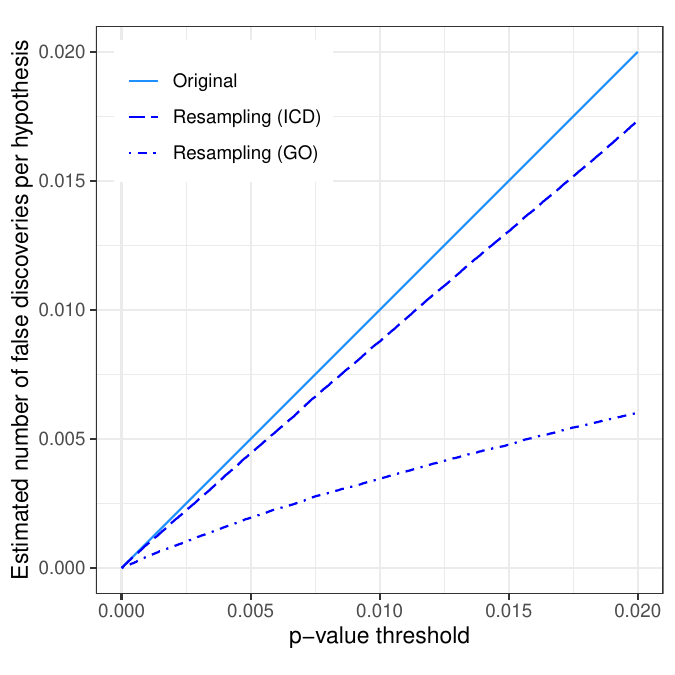}
	\caption{Resampling estimates compared to BH estimate for the two simulation scenarios.}
	\label{fig:permutation_estimates}
\end{figure}
\begin{figure}[h!]
	\centering
	\includegraphics[width = 0.9\textwidth]{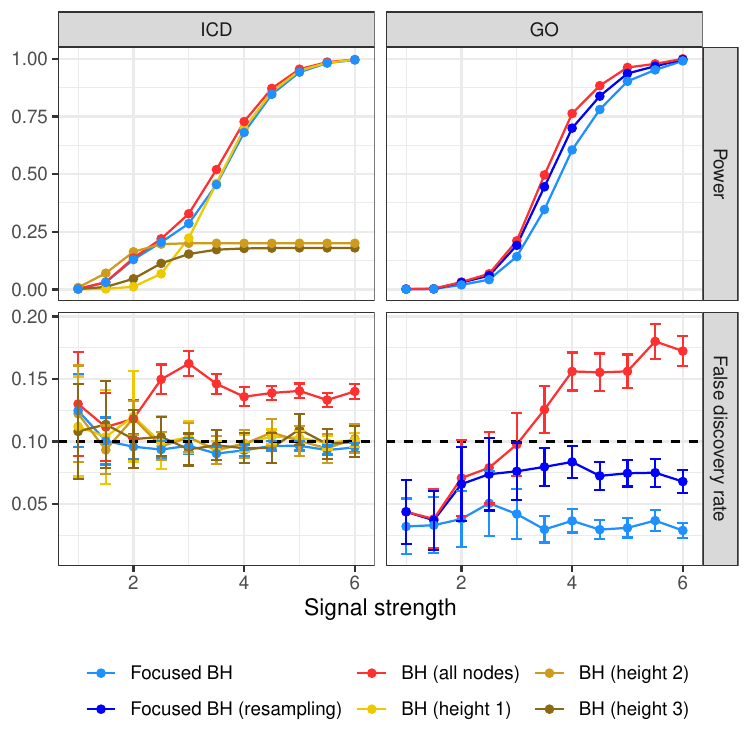}
	\caption{FDR and power of applying BH to either all nodes or to nodes at each depth, followed by filtering, compared to that of Focused BH, in the intermediate setting of the PheWAS simulation (left panels) or the GO simulation (right panels) from Section~\ref{sec:experiments} of the main text.}
	\label{fig:BH_experiments}
\end{figure}

\begin{figure}[h!]
	\centering
	\includegraphics[width = \textwidth]{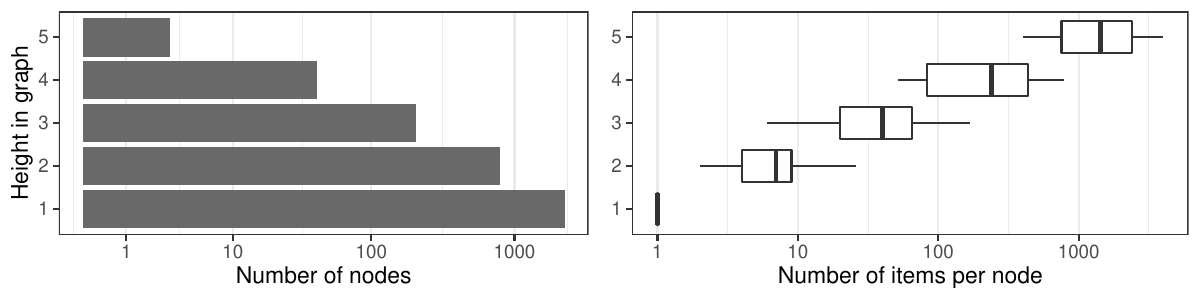}
	\caption{Summary of the ICD subgraph used for data analysis. Number of nodes (left) and numbers of items per node (right) at each level of the tree.}
	\label{fig:ICD_analysis_graph_info}
\end{figure}
\begin{figure}[h!]
	\centering
	\includegraphics[width = \textwidth]{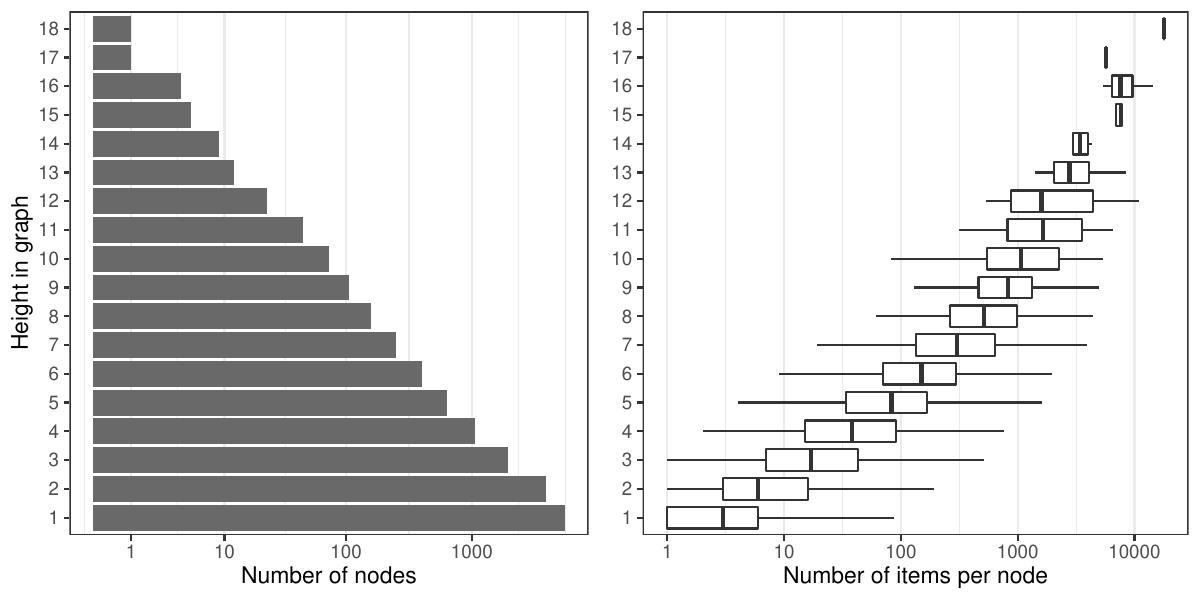}
	\caption{Summary of the GO subgraph used for data analysis. Number of nodes (left) and numbers of items per node (right) at each level of the tree.}
	\label{fig:REVIGO_analysis_graph_info}
\end{figure}

\clearpage

\subsection{Examining novel PheWAS discoveries} \label{sec:UKBB_discoveries}

In the PheWAS data analysis we carried out in Section~\ref{sec:phewas} of the main text, there were four discoveries made by Focused BH but not Leaf BH, Structured Holm, or Yekutieli:
\begin{enumerate}
	\item ``Other synovitis and tenosynovitis";
	\item  ``Chronic sinusitis";
	\item ``Other benign neoplasms of connective and other soft tissue";
	\item ``Symptoms, signs and abnormal clinical and laboratory findings, \\ not elsewhere classified."
\end{enumerate}
In this section, we examine the extent to which existing literature supports these findings.

The first of these is plausible since reactive arthritis, known to be associated with HLA-B*27, is hypothesized to be a form of synovitis \cite{Colmegna2004}. The second disease, chronic sinusitis, has been found to be comorbid with HLA-B*27 associated traits like reactive arthritis \cite{Ngaruiya2013}, and therefore may represent a true, but indirect, association. The association of the third disease with the variant in question is somewhat more tenuous, though other HLA variants have been found to be associated with a neoplasm called meningioma \cite{Machulla2003}. Finally, the last of these associations is reasonable because HLA-B*27 may have clinical manifestations that cannot be assigned to a well-defined diagnosis.

We remark that much more work needs to be done to rigorously validate these findings, but it is beyond the scope of this paper.

\section{GWAS, an application with spatial structure} \label{sec:GWAS_simulation}

To supplement the hierarchical examples in the main text, we consider spatially structured GWAS data.

\subsection{Structure and filtering in GWAS}

The goal of a GWAS is to identify genomic loci where genetic variation is associated with variation in phenotypes: this is carried out by testing the possible association between each of hundreds of thousands of genetic markers (typically single nucleotide polymorphisms, or SNPs) and a trait of interest. Due to spatially localized correlation patterns in the genome called linkage disequilibrium (LD), significant SNPs often come in ``clumps." While none of these SNPs (which are genotyped for their high variability in the population and the ease with which they are assayed) are necessarily mechanistically related with the phenotype of interest, nearby correlated SNPs act as proxies for an unmeasured ``causal'' variant. Therefore, significant SNPs are grouped into \textit{loci}, i.e. groups of nearby SNPs, to be explored in follow-up studies. Each locus is usually represented by the SNP with the most significant p-value, called the ``lead SNP." We refer to this as the ``clumping filter." Software tools like SWISS \cite{SWISS} have been developed with this goal.

For illustration purposes, consider a slight simplification of clumping filters used in practice. The correlation pattern of SNPs is roughly block-diagonal, say with blocks $\mathcal B_1, \dots, \mathcal B_T$ partitioning the genome. We assume that these blocks are defined prior to testing (for example relying on information on linkage disequilibrium available from external sources, or clustering SNPs in the dataset without regard of the outcome value). The {\em GWAS clumping filter} is defined by
\begin{equation*}
\F(\cR, \bm p) = \{j \in \cR: p_j \leq p_{j'} \text{ for all } j' \in \mathcal B_{t(j)}\},
\end{equation*}
where $t(j) \in [T]$ is the block containing SNP $j$. Therefore, this filter keeps the top SNP in each block. Note that for sets $\cR = \cR(t, \bm p)$, the modified clumping filter is equivalent to the screening filter with 
\begin{equation}
\mathcal S(\bm p) = \{j \in [m]: p_j \leq p_{j'} \text{ for all } j' \in \mathcal B_{t(j)}\}.
\label{screening_GWAS}
\end{equation}
It follows that Focused BH with the modified clumping filter is equivalent to Focused BH with the screening filter based on (\ref{screening_GWAS}).

The GWAS clumping filter is monotonic. To see this, note that
\begin{equation}
|\F(\cR, \bm p)| = \sum_{t = 1}^T \ind(\mathcal B_t \cap \cR \neq \varnothing),
\label{clumping_norm}
\end{equation}
i.e. the number of lead SNPs is equal to the number of blocks intersecting the candidate rejection set $\cR$. From this representation, monotonicity is easy to verify.  Therefore, assuming positive dependence of the p-values, Theorem \ref{main_theorem} part (i) assures us that Focused BH controls FDR in this context. As discussed by \cite{BetS17a} (whose result we generalize) the PRDS assumption is difficult to check in practice but it can be expected to hold at least approximately given the positive correlation structure in the genome.

\subsection{Focused BH with simulated GWAS data}

Next, we apply Focused BH with the GWAS clumping filter in a stylized simulation setting. 

\paragraph{Data generating mechanism.} We generate synthetic genotype  data $X \in \{0, 1, 2\}^{n \times m}$  for  $n = 500$ individuals and $m = 3000$ SNPs, following a haplotype block model.
The individual genotypes (rows of $X$) are drawn i.i.d. from a genotype distribution, modeled as the sum of two independent haplotypes (i.e. vectors in $\{0, 1\}^m$). Each haplotype, in turn, is drawn from a Markov chain on $\{0, 1\}$. This Markov chain is constructed so that SNPs come in correlated blocks of size 30, these blocks being independent of each other. The distribution of each block is a stationary Markov chain, with marginal distributions $\PP{X_{ij} = 1} = 0.1$ (i.e. the minor allele frequency of each SNP is 0.1) and transition matrix such that $\PP{X_{i,j+1} = 1|X_{i,j} = 1} = 0.95$. This choice models strong linkage disequilibrium within each block.

Once the genotype matrix is created, the phenotype is generated using  the linear model
\begin{equation*}
y = X\beta + \eps; \quad \eps \sim N(0, I).
\end{equation*}
For each parameter setting, we generate the genotype matrix $X$ once and repeatedly generate $y$ from the above distribution of $y|X$. The values of  $\beta_j$ are  chosen to represent  a set of 10 ``causal SNPs" $\mathcal S \subseteq [m]$ spaced equally along the ``genome:"
\begin{equation*}
\beta_j = \begin{cases}
A \quad \text{if } j \in \mathcal S; \\
0 \quad \text{otherwise.}
\end{cases}
\end{equation*}

Given a genotype matrix $X$ and a phenotype matrix $y$, we obtain p-values for each SNP using the  marginal testing approach that is standard in GWAS. That is, for each $j$ we consider the (misspecified) linear model
\begin{equation*}
y = b_0 + b_j X_j + \eta, \quad \eta \sim N(0, \sigma^2)
\end{equation*}
and use the usual two-sided t-test for the hypothesis $H_0: b_j = 0$. Here, $X_j$ is the column of $X$ corresponding to SNP $j$. 
The null hypothesis tested by this approach is that of no association between SNP $j$ and the phenotype. Taking into account the linkage disequilibrium between SNPs and the phenotype generating process,
the collection of non-null SNPs $\mathcal H_1$ contains all the SNPs $j$ that are  in the same LD group (block) as a causal SNP. 

Note that since the non-null identities of all members in each LD block are the same, it really  doesn't matter which SNP in a block the clumping filter chooses; we can also think of the filter as taking a set of significant SNPs and returning a set of significant ``loci."

\paragraph{Methods compared.} Since the graph-based methods (Yekutieli and Structured Holm) are not applicable to spatially-structured hypotheses, we apply only BH followed by filtering, and three variants of Focused BH: regular, permutation-based, and oracle. The permutation-based method is as described in Section~\ref{sec:improving} of the main text, with genotypes playing the role of gene expression in the example given there. The oracle version is based on the estimate~\eqref{V_hat_oracle}, which we use a theoretical benchmark.

\paragraph{Results.}

Figure \ref{fig:GWAS_results} shows the power and FDR of the methods considered. It is clear that BH followed by filtering loses FDR control quite dramatically. All of the variants of Focused BH control the FDR in this case. This includes the permutation-based version of Focused BH, echoing the same conclusion from Figure~\ref{fig:REVIGO_experiment} and reinforcing our conjecture that this methodology does control the FDR. Excluding BH due to its lack of calibration, the oracle and permutation versions of Focused BH have the highest power, followed by the original version. It is remarkable that the permutation-based version of Focused BH attains near-oracle performance in this simulation.

Figure \ref{fig:Manhattan} and Table~\ref{table:Manhattan} show the results of one run of this simulation (with $A = 0.45$). It is clear how  linkage disequilibrium translates into multiple adjacent SNPs to have  significant p-values. The BH method, ignoring this, sets the p-value threshold too optimistically, incurring many false discoveries. Focused BH corrects for the filter, but in this case is somewhat conservative. The permutation procedure provides the best trade-off: making few false discoveries while achieving higher power.

\begin{figure}[h!]
	\centering
	\includegraphics[width = 0.8\textwidth]{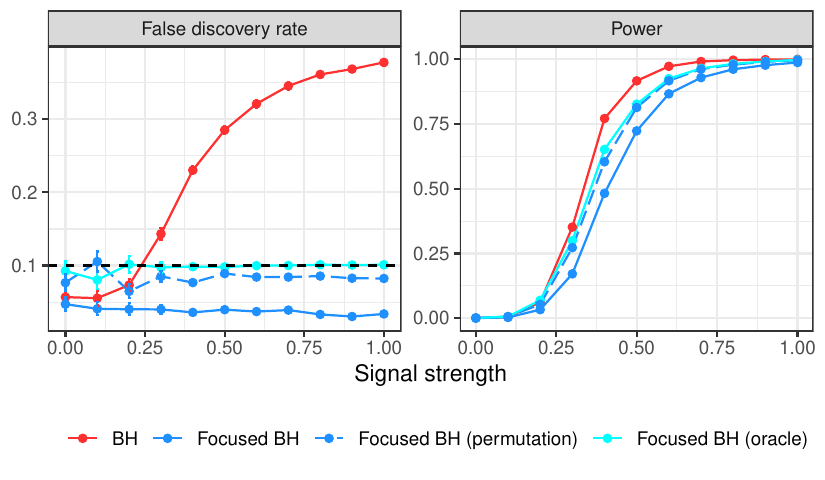}
	\caption{FDR and power for the methods compared on the
		GWAS simulation using the clumping filter. The target FDR for each procedure is 0.1; reported are the averages of 500 replications.}
	\label{fig:GWAS_results}
\end{figure}

\begin{figure}[h!]
	\centering
	\includegraphics[width = 0.8\textwidth]{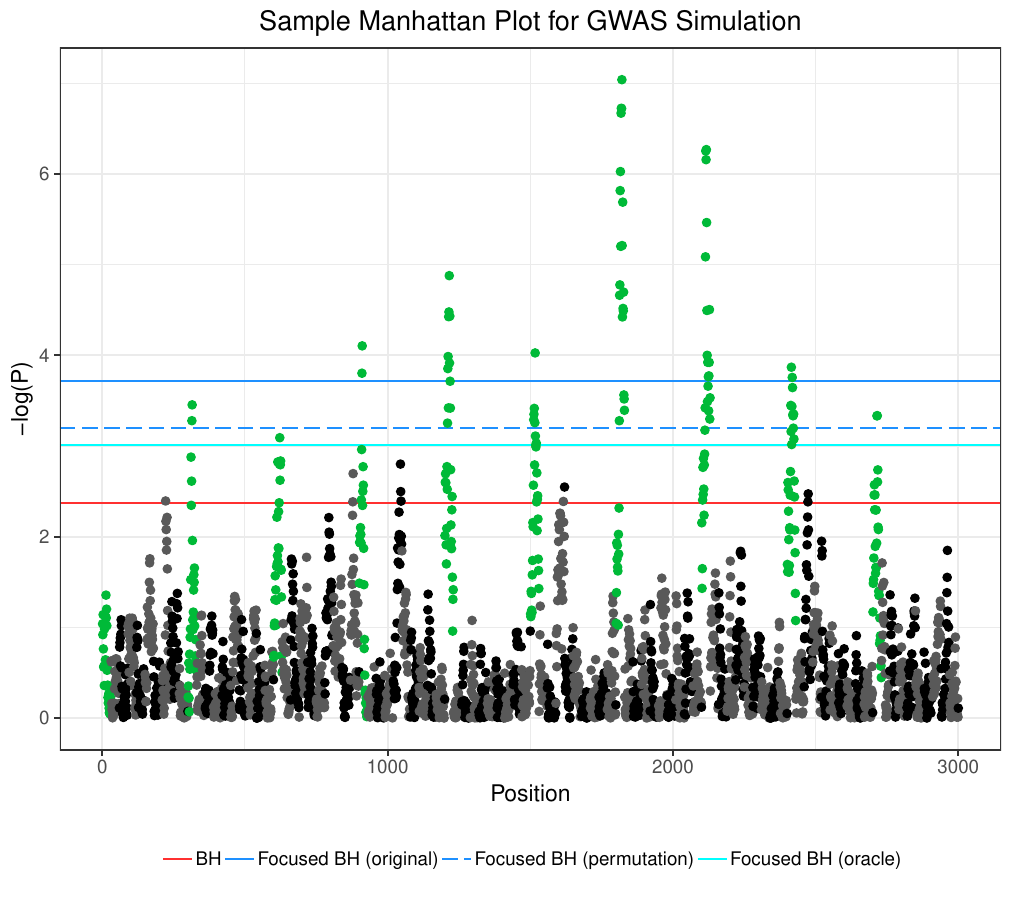}
	\caption{Manhattan plot illustrating one replicate of the GWAS-clumping simulation with $A = 0.45$. Log10 of the p-values for non-null SNPs are shown in green; and for null SNPs in black and gray, with alternating colors across different  LD blocks. The horizontal lines represent the p-value cutoff of the different methods under consideration, color coded as in Figure \ref{fig:GWAS_results}.}
	\label{fig:Manhattan}
\end{figure}

\begin{table}[h!]
	\centering
	\begin{tabular}{l|ccc}
		Method & FDP before filtering & $\fdp$ after filtering & Power after filtering\\
		\hline
		\textcolor{firebrick1}{BH} & 0.08 & 0.36 & 0.9 \\
		\textcolor{dodgerblue}{Focused BH (original)} & 0 & 0 & 0.6\\
		\textcolor{dodgerblue}{Focused BH (permutation)} & 0 & 0 & 0.8\\
		\textcolor{cyan}{Focused BH (oracle)} & 0 & 0 & 0.9 \\
	\end{tabular}
	\caption{FDP (pre- and post-filtering) as well as the (realized) post-filtering power of BH and three versions of Focused BH on the problem instance in Figure~\ref{fig:Manhattan}.}
	\label{table:Manhattan}
\end{table}

\section{From subsetting to prioritization} \label{sec:app_fractional}

As discussed in Section~\ref{sec:discussion} of the main text, Focused BH may be extended to filters that prioritize rejections in addition to subsetting them. We define this more general kind of filter in Section~\ref{sec:supp_extending} and then give an example in Section~\ref{sec:supp_soft_outer_nodes}.

\subsection{A more general definition of filtering} \label{sec:supp_extending}

\begin{definition}
	Given a set of p-values $\bm p$ and a subset $\cR \subseteq \mathcal H,$ a \textit{generalized filter} $\F$ is a map ${\F}: (\cR, \bm p)\mapsto \bm U \in [0,1]^m$  with the property that
	\begin{equation*}
	U_j = 0 \quad \text{if} \quad j \not \in \cR.
	\end{equation*}
\end{definition}

The vector $\bm U \in [0,1]^m$ is a \textit{prioritization vector}, attaching an importance to each hypothesis in $\mathcal H$ (the larger $U_j$, the more important $H_j$ is deemed). The requirement that $U_j = 0$ for $j \not \in \cR$ reflects the fact that the prioritization really occurs on the set $\cR$ and the rest of the hypotheses get a weight of zero. 
While for a subsetting filter $U_j \in \{0,1\}$,  continuous-valued $U_j \in (0, 1)$ can describe filters which do not necessarily remove rejections altogether but instead assign to each some measure of importance. For example, given a set of fixed weights $u_1, \dots, u_m \in [0,1]$, the \textit{fixed weights filter} is defined via
\begin{equation}
U_j \equiv u_j \ind(j \in \cR).
\label{fixed_weights}
\end{equation}
This is the simplest example of a filter for which $U_j \not \in \{0,1\}$, and in this case $\bm U$ carries the same amount of information as the set $\mathcal U = \{j: U_j > 0\}$. We present a more nontrivial example of a prioritization filter in the next section. 

The composition of a base multiple testing procedure with a generalized filter inputs a set of p-values and outputs a prioritization. This can be viewed as a generalized multiple testing procedure, whose FDP for any output $\bm U$ can naturally be defined via
\begin{equation}
\textnormal{FDP}(\bm U) \equiv \frac{\sum_{j = 1}^m U_j \ind(j \in \mathcal H_0)}{\sum_{j = 1}^m U_j},
\label{weighted_FDP}
\end{equation}
where we define $0/0 \equiv 0$. The sum in the denominator $\sum_{j = 1}^m U_j$ represents the total number of discoveries slated for further investigation, or total amount of resources devoted to follow-up, or the total number of distinct scientific findings that will be discussed in the paper summarizing results. The FDP is then the fraction of this total devoted to null hypotheses. Note that the definition \eqref{weighted_FDP} is the same as that of the hypothesis-weighted FDP in previous works, such as \cite{BH97, BR08}, when the prioritization vector is given by \eqref{fixed_weights}. 

Now, may we extend Focused BH to generalized filters, with the only change that 
\begin{equation*}
\fdphat(t) \equiv \frac{m \cdot t}{\norm{\F(\{j: p_j \leq t\}, \bm p)}}, \quad \text{where } \norm{\bm U} \equiv \sum_{j = 1}^{m} U_j.
\end{equation*}
These fractional weights pose minimal extra theoretical challenges, and in fact the following theorem (whose proof we omit for the sake of brevity) states that all our previous results continue to hold for this more general case. 

\begin{theorem} \label{fractional_theorem}
	The results of Theorems \ref{main_theorem}, \ref{secondary_theorem}, \ref{storey-theorem}, \ref{reshaping-theorem}, and \ref{multi-focus-bh}---if $|\F(\cR, \bm p)|$ is replaced with $\norm{\F(\cR, \bm p)}$ where appropriate---continue to hold for generalized filters.
\end{theorem}

\subsection{Example: The soft outer nodes filter} \label{sec:supp_soft_outer_nodes}

As we discussed in Section~\ref{sec:theoretical_results} of the main text, the outer nodes filter is not monotonic on general DAGs. This suggests that more information is contained in the rejection of two parent nodes than in the rejection just of their child, so other filters may be more appropriate in this context. In this section, we propose the \textit{soft outer nodes filter}, a prioritization filter that is monotonic on general DAGs. 

The starting point for the soft outer nodes filter is the remark made by \cite{PetC09}: to appropriately interpret the information in the GO, one cannot simply look at the graph structure (``edge information''), but needs to take explicitly into consideration the set of genes with which each node is annotated (``node information''). Indeed, variable numbers of studies have been devoted to different biological concepts as well as the relationships among them. If a process has been studied in detail, it will be described by a larger number of nodes, at variable depth, the semantic difference between which can be much smaller than the difference between two nodes separated by only one edge in another portion of the GO DAG, related to a biological process on which only coarser information is available. This plays a role in choosing an appropriate strategy to reduce redundancy in the rejection set. 

To design the soft outer nodes filter, therefore, we down-weight a rejected node if it shares items with smaller rejected nodes. By contrast, if any of these smaller rejected nodes were descendants of this node, the outer nodes filter would have removed the node entirely. Given a rejection set $\cR$, each node receives a weight 
\begin{equation}
U_j = \gamma_j \cdot u_j,
\end{equation} 
where $u_j \geq 0$ is a weight determined a priori quantifying how much ``information" the node carries and $\gamma_j \in [0,1]$ quantifies the fraction of a node's information that is ``novel" with respect to other rejected nodes.

First, we give each node an a priori weight
\begin{equation}
u_j = -\log\left(\frac{|\mathcal A_j|}{K}\right),
\label{IC}
\end{equation}
where $\mathcal A_j$ are the items in node $j$ and $K$ is the total number of items. This quantity is the \textit{information content} (IC) of a node (see e.g. \cite{PetC09}), and is larger for smaller and thus more informative nodes. These IC weights establish a baseline for how important the discovery of a given node is. To determine $\gamma_j$, the idea is to evaluate distinct discoveries using the items they implicate: we give a discovered node $j \in \cR$ credit for the discovery of an item $k \in \mathcal A_j$ if it is the smallest among the discovered nodes containing $k$. Define
\begin{equation}
S_k = \min\{|\mathcal A_{j}|: j \in \cR, \ \mathcal A_{j} \ni k\},
\label{S_g_def}
\end{equation}
to be the minimum size of a discovered node containing item $k$, and define
\begin{equation*}
\cR^k = \{j \in \cR: \mathcal A_j \ni k, |\mathcal A_j| = S_k\},
\end{equation*}
the set of nodes achieving this minimum size. By convention, set $S_k = \infty$ and $\cR^k = \varnothing$ if no nodes containing item $k$ were discovered. Note that the set $\cR^k$ might have more than one node, indicating that there may be multiple nodes of the same size that all can claim credit for the discovery of item $k$. In this case, the credit is split equally among these nodes, each getting $1/|\cR^k|$. Putting these pieces together, the fraction of novel information contributed by node $j$ is
\begin{equation}
\gamma_j \equiv \frac{\sum_{k \in \mathcal A_j} \frac{1}{|\cR^k|}\ind(j \in \cR^k)}{|\mathcal A_j|};
\label{credit}
\end{equation}
i.e. the total credit the node received from all its items, divided by the total number of items in the node (see Figure \ref{fig:soft_outer_nodes}). Finally, putting together (\ref{IC}) and (\ref{credit}), the prioritization scores induced by the soft outer nodes filter are:
\begin{equation*}
U_j \equiv -\log\left(\frac{|\mathcal A_j|}{K}\right) \cdot \frac{\sum_{k \in \mathcal A_j} \frac{1}{|\cR^k|}\ind(j \in \cR^k)}{|\mathcal A_j|}.
\end{equation*}

\begin{figure}[h!]
	
	\begin{center}
		\hspace*{-.5cm}   \begin{tikzpicture}[scale=.85]
		
		\node[text=red] at(0.5,2) {0.75};
		\node[text=red] at(0.5,4) {0.33};
		\node[text=red] at(3.5,2) {0.75};
		\node[text=red] at(6.7,2) {0.5};
		\node[text=red] at(9.9,2) {1};
		\node[text=red] at(5.2,4) {0};
		\node[text=red] at(9.7,4) {1};
		\node[text=red] at(11.25,4) {1};
		\node[text=red] at(6.1,6) {$\displaystyle\frac{1}{8}$ \hspace{1in}};
		
		\node[obs,  text opacity=1]  (H1)  at(0,0)   {a} ;
		\node[obs,  , text opacity=1]  (H2)  at(3,0)   {b} ;
		\node[obs,   text opacity=1]  (H3)  at(9,0)   {g} ;
		\node[obs,  text opacity=1,very thick,draw=red]  (-1H1)  at(1.5,2)   {{\bf a,b}$^*$} ;
		\node[obs, text opacity=1,very thick,draw=red]  (-1H2)  at(4.5,2)   {{\bf b}$^*$,{\bf e}} ;
		\node[obs,   text opacity=1,very thick,draw=red]  (-1H3)  at(7.5,2)   {{\bf f},g} ;
		\node[obs,  text opacity=1,very thick,draw=red]  (-1H4)  at(10.5,2)   {\bf g} ;
		\node[obs,  text opacity=1,very thick,draw=red]  (-2H1)  at(1.5,4)   {a,b,{\bf c}} ;
		\node[obs,  text opacity=1,very thick,draw=red]  (-2H2)  at(6,4)   {b,e,f,g} ;
		\node[obs,  text opacity=1]  (-2H3)  at(7.5,4)   { i} ;
		\node[obs,  text opacity=1]  (-2H4)  at(9,4)   {l} ;
		\node[obs,  text opacity=1,very thick,draw=red]  (-2H5)  at(10.5,4)   {\bf m,n,o} ;
		\node[obs,  text opacity=1,text width=.8cm,very thick,draw=red]  (-2H6)  at(12,4)   {\bf p,q,\\
			r,s,t} ;
		\node[obs,  text opacity=1]  (-3H1)  at(1.5,6)   {a,b,c,d} ;
		\node[obs,  text opacity=1,text width=1.2cm,very thick,draw=red]  (-3H2)  at(7.5,6)   {  a,b,c,\\ e,f,g,{\bf i,l}\\ m,n,o,p,\\ q,r,s,t} ;

		\edge[draw=gray!50] {-1H1} {H1}; %
		\edge[draw=gray!50]  {-1H1} {H2}; %

		\edge[draw=gray!50]  {-1H2} {H2}; %
		\edge[draw=gray!50]  {-1H3} {H3}; %
		\edge[draw=gray!50]  {-1H4} {H3}; %
		\edge[draw=gray!50]  {-2H1} {-1H1}; %
		\edge[draw=gray!50]  {-2H2} {-1H2}; %
		\edge[draw=gray!50]  {-2H2} {-1H3}; %
		\edge[draw=gray!50]  {-3H1} {-2H1}; %
		\edge[draw=gray!50]  {-3H2} {-2H1}; %
		\edge[draw=gray!50]  {-3H2} {-2H2}; %
		\edge[draw=gray!50]  {-3H2} {-2H3}; %
		
		\edge[draw=gray!50]  {-3H2} {-2H4}; %
		\edge[draw=gray!50]  {-3H2} {-2H5}; %
		\edge[draw=gray!50]  {-3H2} {-2H6}; %

		
		\end{tikzpicture}
	\end{center}
	
	\caption{Example of soft outer node prioritization scores on a DAG. Nodes are circled in red if their corresponding hypotheses are in $\cR^*$.  Letters in the nodes represent the genes with which they are annotated; 
		genes for which the node gets credit are in bold, with an asterisk indicating if the credit is shared. The soft outer node prioritization score excluding the a priori IC weights (i.e. $\gamma_j$) is reported as a red number on the side of the node.}	
	\label{fig:soft_outer_nodes}
\end{figure}
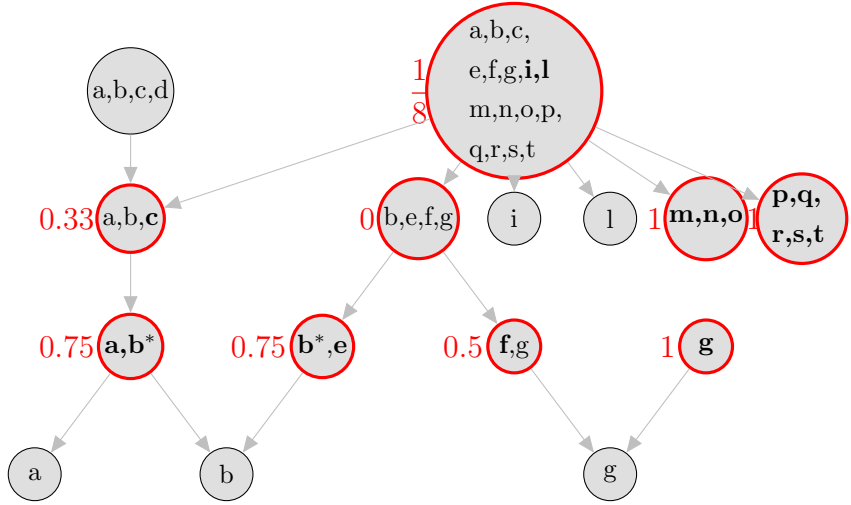

It turns out the soft outer nodes filter is monotonic on arbitrary DAGs. Indeed, to verify this, note that 
\begin{equation*}
\begin{split}
\norm{\F_0(\cR)} = \sum_{j = 1}^m U_j &= \sum_{j = 1}^m  \frac{\sum_{k \in \mathcal A_j} \frac{1}{|\cR^k|}\ind(j \in \cR^k)}{|\mathcal A_j|}u_j \\
&= \sum_{k = 1}^K \sum_{j \in \cR^k}\frac{u_j}{|\cR^k||\mathcal A_j|} = \sum_{k = 1}^K \sum_{j \in \cR^k}\frac{-\log\left(\frac{S_k}{K}\right)}{|\cR^k|S_k} = \sum_{k = 1}^K \frac{-\log\left(\frac{S_k}{K}\right)}{S_k}.
\end{split}
\end{equation*}
Clearly, from (\ref{S_g_def}), $S_g$ can only decrease as $\cR$ increases, so $\norm{\F_0(\cR)}$ increases as $\cR$ increases. Intuitively, the reason for the monotonicity is that expanding the rejection set $\cR$ can only decrease the size of the node(s) taking credit for a given item $k$, so the total weight accounted for by this item will increase.

Hence, if the node p-values are PRDS, the extension of Theorem \ref{main_theorem} part (i) to prioritization filters in Theorem~\ref{fractional_theorem} implies that Focused BH controls FDR on arbitrary DAGs with the soft outer nodes filter. Soft outer nodes are a meaningful summary for DAGs like GO where, in addition to edge information, it is important to consider node measures, in the form of associated genes. Finally, we note that the prioritization defined by soft outer nodes is a first example of $U_j \in [0,1]$, where the fractional value is not dictated by weights $u_j$ set a priori, but is data dependent.

\section{Connection to structured multiple testing} \label{sec:supp_structured}

As discussed in Section~\ref{sec:implications_beyond} of the main text, there is a connection between filtering and structured multiple testing. In this section, we consider fixed filters, which by slight abuse of notation we will think of as inputting only the set $\cR$ (since there is no dependency on $\bm p$):
\begin{equation}
\cU = \F(\cR).
\label{abuse}
\end{equation}

Consider the case of the outer nodes filter $\F = \F_{\mathcal T}$ with respect to a tree structure on $\mathcal H$. The reason we are applying a filter is to obtain a set $\cU^*$ that satisfies a particular ``non-redundancy" constraint:
\begin{equation}
\cU^* \in \sU \equiv \{\cU \subseteq \mathcal H: \text{there do not exist } j_1, j_2 \in \cU \text{ such that } j_1 \leadsto j_2\}.
\label{outer_nodes_structure}
\end{equation}
Here, $\mathscr U$ is the collection of sets satisfying the outer nodes property: no element is the descendant of another. When phrased this way, our problem becomes one of multiple testing with structural constraints:
searching for $\cR^*$ such that $\F_{\mathcal T}(\cR^*)$ controls FDR can be phrased as searching for $\cU^* \in \sU$ controlling FDR. We explore briefly this duality between structural constraints and filters.

For any filter we may define the class of acceptable rejection sets
\begin{equation}
\sU_{\F} \equiv \{\cU = \F(\cR) \text{ for some } \cR \subseteq \cH\}.
\label{U_F}
\end{equation}
The filter $\F$ then acts as a ``projection" onto this set $\sU_{\F}$. Then Focused BH is  a procedure that inputs a set of p-values $\bm p$ and outputs a member $\cU^* \in \sU_{\F}$ such that $\EE{\fdp(\cU^*)} \leq q$. 

Conversely, let $\sU \subseteq 2^{\cH}$ be a collection of rejection sets (specified a priori) that obey a certain structural constraint and let us consider a procedure that directly searches for a rejection set  $\cU^*\in \sU$.   Specifically, let us define  \textit{Structured BH} (independently discovered by \cite{RetJ17}) as follows. 
For each $\cU \in \sU$, define
\begin{equation}
\fdphat^{\text{SBH}}(\cU) \equiv \frac{m \cdot \max_{j \in \cU} p_j}{|\cU|}.
\label{fdphat}
\end{equation}\
Then, let
\begin{equation}
\cU^* \equiv \arg \max\{|\cU|: \cU \in \sU,\ \fdphat^{\text{SBH}}(\cU) \leq q\},
\label{structured_BH}
\end{equation}
i.e. we choose the largest set in $\sU$ for which the estimated FDP is below the target. Note that this maximum might not be unique, in which case we allow Structured BH to output any of these maximal sets. 

\begin{proposition} \label{prop:structured_BH} For any structure class $\sU$, Structured BH controls the FDR as long as the p-values are PRDS. 
\end{proposition}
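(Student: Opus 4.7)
The plan is to reduce Structured BH to Focused BH with the filter $\F_{\sU}$ defined in \eqref{filter_structure}, and then invoke Theorem \ref{main_theorem} part (i).

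First I would establish the equivalence of the two procedures. For any candidate set $\cU \in \sU$, let $t_\cU \equiv \max_{j \in \cU} p_j$; then $\cU \subseteq \cR(t_\cU, \bm p)$ and so by definition of $\F_\sU$, $|\cU| \leq |\F_\sU(\cR(t_\cU, \bm p))|$. Consequently, the Structured BH constraint $\fdphat^{\text{SBH}}(\cU) \leq q$ implies the Focused BH constraint $mt_\cU/|\F_\sU(\cR(t_\cU, \bm p))| \leq q$, whence $t_\cU \leq t^*_{\text{FBH}}$ and $|\cU| \leq |\F_\sU(\cR(t^*_{\text{FBH}}, \bm p))|$. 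Conversely, $\F_\sU(\cR(t^*_{\text{FBH}}, \bm p))$ is itself a feasible candidate for Structured BH (it lies in $\sU$, its largest p-value is at most $t^*_{\text{FBH}}$, and its estimated $\fdphat^{\text{SBH}}$ is at most $q$), so the two procedures achieve the same maximum cardinality. With a consistent tie-breaking convention, the Structured BH output $\cU^*$ can be identified with the Focused BH output $\F_\sU(\cR(t^*_{\text{FBH}}, \bm p))$.

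Next I would verify that $\F_\sU$ is monotonic. Since $\F_\sU$ depends only on $\cR$, it suffices by \eqref{monotonicity_fixed} to check that $\cR^1 \supseteq \cR^2$ implies $|\F_\sU(\cR^1)| \geq |\F_\sU(\cR^2)|$; but any $\cU \in \sU$ with $\cU \subseteq \cR^2$ also satisfies $\cU \subseteq \cR^1$, so the maximum in \eqref{filter_structure} is at least as large over $\cR^1$ as over $\cR^2$.

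Combining these two pieces, Theorem \ref{main_theorem} part (i), applied under the PRDS assumption, yields $\fdr_{\F_\sU} \leq q$, which by the equivalence is precisely FDR control of Structured BH. The only mild technical point to handle carefully is tie-breaking in the two arg maxes, which is where I expect the bookkeeping to be most delicate; however, since monotonicity concerns only the cardinality $\|\F_\sU(\cR)\|$ (which is well-defined independent of the choice of representative) and since the FDR depends only on the cardinality and identity of the selected set, adopting a consistent tie-breaking rule in both formulations reconciles the outputs and the argument goes through.
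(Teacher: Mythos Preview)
Your argument is correct, but it takes a genuinely different route from the paper. The paper proves the proposition directly in three lines: from the definition of $\fdphat^{\text{SBH}}$ and of $\cU^*$ one has $j \in \cU^* \Rightarrow p_j \leq q|\cU^*|/m$, and the maximal cardinality $|\cU^*|$ is easily seen to be non-increasing in $\bm p$ (the feasible region in \eqref{structured_BH} shrinks as p-values grow); the PRDS lemma of \cite{RetJ17} then finishes the bound. In particular the paper never appeals to the filter $\F_{\sU}$ or to Theorem~\ref{main_theorem}.

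Your approach instead establishes the equivalence of Structured BH with Focused BH applied to $\F_{\sU}$---which is exactly the content of Proposition~\ref{prop:equivalence}, stated and proved immediately after---verifies that $\F_{\sU}$ is monotonic, and then invokes Theorem~\ref{main_theorem}(i). This is valid, and the tie-breaking discussion you give is adequate: since both $t^*_{\text{FBH}}$ and the Structured BH optimal cardinality depend only on $\|\F_{\sU}(\cdot)\|$, which is unambiguous, the reduction goes through for any consistent convention. The trade-off is that the paper's direct proof is shorter and self-contained, whereas your route is more conceptual (it makes the Structured BH/Focused BH connection do the work) but relies on two additional pieces of machinery---the equivalence and the main theorem---whose proofs ultimately rest on the same PRDS lemma anyway.
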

\begin{proof}
	This statement follows from the first part of Lemma~\ref{lem:FDR_control} since Structured BH is by construction a self-consistent procedure with respect to $\Delta(\bm p, r) = \frac{q}{m}r$, for which $|\mathcal U^*|$ is a nonincreasing function of each p-value.
\end{proof}

What is the relation between these two approaches to the problem? 
It turns out that Structured BH for any $\sU$ can be recast as Focused BH for the filter $\F_{\sU}(\cR)$, defined via
\begin{equation}
\F_{\sU}(\cR)\equiv  \underset{\cU \in \sU: \cU \subseteq \cR}{\arg \max}\ |\cU|,
\label{filter_structure}
\end{equation}
i.e. $\F_{\sU}(\cR)$ is the largest subset of $\cR$ belonging to $\sU$ (again, we allow any maximal element above if there is not a unique one).  

\begin{proposition} \label{prop:equivalence}
	We are given a structure class $\sU$. Let us define a filter $\F_{\sU}(\cR)$ via \eqref{filter_structure}. Then, Structured BH with the structure class $\sU$ is equivalent to Focused BH with the filter $\F_{\sU}$ (modulo the possible ambiguity in the definitions of Structured BH and $\F_{\sU}$).
\end{proposition}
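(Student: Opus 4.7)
The plan is to show that the optimization problems solved by Structured BH and by Focused BH with the filter $\F_\sU$ are the same (modulo the non-uniqueness inherent in both $\arg\max$ definitions). The bridge is the identity
\begin{equation*}
\norm{\F_\sU(\cR(t, \bm p))} \;=\; \max\bigl\{|\cU| : \cU \in \sU,\ \cU \subseteq \cR(t, \bm p)\bigr\} \;=\; \max\bigl\{|\cU| : \cU \in \sU,\ \max_{j \in \cU} p_j \leq t\bigr\},
\end{equation*}
which follows immediately from the definition \eqref{filter_structure} of $\F_\sU$ together with the observation that $\cU \subseteq \cR(t, \bm p) = \{j : p_j \leq t\}$ iff $\max_{j \in \cU} p_j \leq t$.

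Next, let $t^*_F$ be the threshold selected by Focused BH with filter $\F_\sU$, let $\cU^*_F \in \F_\sU(\cR(t^*_F, \bm p))$ be its output, and let $\cU^*_S$ be any maximizer for Structured BH with threshold $t^*_S \equiv \max_{j \in \cU^*_S} p_j$ (set to $0$ if $\cU^*_S = \varnothing$). I would then verify two matching inequalities using the displayed identity. First, note that $\cU^*_S$ is feasible for Focused BH at threshold $t^*_S$: indeed $t^*_S \in \{0, p_1, \dots, p_m\}$, and the Structured BH constraint $m \cdot t^*_S / |\cU^*_S| \leq q$ combined with $\norm{\F_\sU(\cR(t^*_S, \bm p))} \geq |\cU^*_S|$ gives $\fdphat(t^*_S) \leq q$. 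Hence $t^*_F \geq t^*_S$ and $|\cU^*_F| = \norm{\F_\sU(\cR(t^*_F, \bm p))} \geq \norm{\F_\sU(\cR(t^*_S, \bm p))} \geq |\cU^*_S|$ (the first inequality uses that $\F_\sU$ is monotonic, which is straightforward from \eqref{filter_structure}).

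Conversely, $\cU^*_F \in \sU$ satisfies $\max_{j \in \cU^*_F} p_j \leq t^*_F$ and $m t^*_F / |\cU^*_F| \leq q$, so $\cU^*_F$ is Structured BH feasible with $\fdphat^{\text{SBH}}(\cU^*_F) \leq m t^*_F/|\cU^*_F| \leq q$. By maximality of $\cU^*_S$ we then get $|\cU^*_S| \geq |\cU^*_F|$. Combining these gives $|\cU^*_F| = |\cU^*_S|$, and each of $\cU^*_S$ and $\cU^*_F$ is an admissible output of the other procedure.

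The main obstacle is purely bookkeeping: both Structured BH and $\F_\sU$ are defined via $\arg\max$ expressions that need not be unique, and one must be careful not to claim literal equality of outputs but only equivalence ``modulo ties'' as stated in the proposition. Once the $\arg\max$'s are handled cleanly via the two-sided feasibility-plus-maximality argument above, the monotonicity of $\F_\sU$ and the identity in the first paragraph do all the work.
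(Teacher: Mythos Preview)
Your proof is correct and follows essentially the same route as the paper's: show that the Structured BH output $\cU^*_S$ yields a feasible Focused BH threshold (giving $t^*_F \geq t^*_S$ and one size inequality via monotonicity of $\F_\sU$), and that the Focused BH output $\cU^*_F$ is feasible for Structured BH (giving the reverse size inequality). The paper's argument proceeds through the same two comparisons, just organized a bit more sequentially (it inserts the intermediate equality $|\cU^*| = \|\F(\cR(t^{**}))\|$ before passing to $t^*$), whereas you phrase it as a clean two-sided feasibility-plus-maximality check; the substance is identical.
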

\begin{proof}
	Let $\cU^*$ be the output of Structured BH, and let $t^*$ be the Focused BH threshold. For the proof we abbreviate $\F_{\sU}$ by writing $\F$ instead. We must show that $\fdphat^{\text{SBH}}(\F(\cR(t^*))) \leq q$ and $|\F(\cR(t^*))| = |\cU^*|$, which would imply that $\F(\cR(t^*))$ is one of the potential outputs of Structured BH. 
	
	Let $t^{**} = \max_{j \in \cU^*} p_j$. We claim that $|\cU^*| = |\F(\cR(t^{**}))|$. To see this, note that $\F(\cR(t^{**})) \in \sU$ and $|\F(\cR(t^{**}))| \geq |\cU^*|$ by the definition of $\F$. Additionally,
	\begin{equation}
	\fdphat^{\text{SBH}}(\F(\cR(t^{**}))) = \frac{m \cdot \max_{j \in \F(\cR(t^{**}))}p_j}{|\F(\cR(t^{**}))|} \leq \frac{m \cdot t^{**}}{|\cU^*|} = \fdphat^{\text{SBH}}(\cU^*) \leq q.
	\end{equation}
	Hence, $\F(\cR(t^{**}))$ is a set of maximal size in $\sU$ for which $\fdphat^{\text{SBH}} \leq q$, so $|\F(\cR(t^{**}))| = |\cU^*|$. Next, we claim that $|\F(\cR(t^{**}))| = |\F(\cR(t^{*}))|$. We have
	\begin{equation*}
	\fdphat^{\text{FBH}}(t^{**}) = \frac{m \cdot t^{**}}{|\F(\cR(t^{**}))|} = \frac{m \cdot t^{**}}{|\cU^*|} \leq q,
	\end{equation*}
	from which it follows that $t^{*} \geq t^{**}$. Since $\F$ is monotonic, it follows that $|\F(\cR(t^{*}))| \geq |\F(\cR(t^{**}))|$. On the other hand, note that 
	\begin{equation}
	\fdphat^{\text{SBH}}(\F(\cR(t^*))) = \frac{m\cdot \max_{j \in \F(\cR(t^*))}p_j}{|\F(\cR(t^*))|} \leq \frac{m\cdot t^*}{|\F(\cR(t^*))|} = \fdphat^{\text{FBH}}(t^*) \leq q,\label{t-star-cond}
	\end{equation}
	which shows that $|\F(\cR(t^{**}))| = |\cU^*|\geq |\F(\cR(t^{*}))|$. Hence, $|\F(\cR(t^{**}))| = |\F(\cR(t^{*}))|$, as claimed. It follows that $|\cU^*| = |\F(\cR(t^*))|$, which completes the proof.
\end{proof}

Conversely, if a fixed and monotonic filter is also \textit{idempotent} (i.e. $\F^2 = \F$), then $\F = \F_{\sU}$ for the structure class $\sU$ defined in (\ref{U_F}). 

\begin{proposition} \label{prop:equivalence_2}
	Suppose $\F$ is a fixed, monotonic, and idempotent filter. Then, $\F = \F_{\sU}$ for $\sU = \{\F(\cR): \cR \subseteq 2^{m}\}$.
\end{proposition}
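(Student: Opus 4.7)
The plan is to unpack the definitions and to invoke each of the three hypotheses on $\F$ in exactly one natural place. Throughout, I work with fixed subsetting filters as in Section \ref{sec:structured}, so $\F(\cR)$ denotes a subset of $\cR$ and $\norm{\F(\cR)} = |\F(\cR)|$. Fix $\cR \subseteq [m]$; the goal is to show that $\F(\cR) = \F_{\sU}(\cR)$, i.e.\ that $\F(\cR)$ is a largest element of $\sU$ that is contained in $\cR$ (modulo the non-uniqueness of the $\arg\max$ in the definition (\ref{filter_structure}), exactly as in Proposition \ref{prop:equivalence}).

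First, I would verify that $\F(\cR)$ is feasible in the optimization (\ref{filter_structure}). It lies in $\sU$ by the very definition $\sU = \{\F(\cR'): \cR' \subseteq [m]\}$, and it satisfies $\F(\cR) \subseteq \cR$ because $\F$ is a subsetting filter. This uses that $\F$ is fixed: $\sU$ is well-defined because $\F(\cR)$ depends only on $\cR$.

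Second, I would show that no element of $\sU$ contained in $\cR$ is strictly larger than $\F(\cR)$. Take any $\cU \in \sU$ with $\cU \subseteq \cR$. By the definition of $\sU$ there exists $\cR'$ with $\cU = \F(\cR')$. Idempotence then gives $\F(\cU) = \F(\F(\cR')) = \F(\cR') = \cU$, so in particular $|\F(\cU)| = |\cU|$. Monotonicity applied to $\cR \supseteq \cU$ (using that for fixed subsetting filters $\norm{\F(\cdot)} = |\F(\cdot)|$) yields $|\F(\cR)| \geq |\F(\cU)| = |\cU|$.

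Combining the two observations, $\F(\cR)$ attains the maximum in (\ref{filter_structure}), so $\F(\cR) = \F_{\sU}(\cR)$ up to the $\arg\max$ ambiguity. I do not anticipate any real obstacle: each of the three hypotheses (fixed, monotonic, idempotent) is used in exactly one line, and the only subtlety is the standard convention about selecting a maximizer when multiple maximal sets exist, which is handled identically to Proposition \ref{prop:equivalence}.
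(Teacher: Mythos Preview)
Your proposal is correct and matches the paper's proof essentially line for line: both show that $\F(\cR)$ is feasible for (\ref{filter_structure}) and then argue, for any $\cU = \F(\cR') \subseteq \cR$, that idempotence gives $\|\F(\cR')\| = \|\F(\F(\cR'))\|$ and monotonicity (applied to $\F(\cR') \subseteq \cR$) gives $\|\F(\F(\cR'))\| \leq \|\F(\cR)\|$. The only difference is that you spell out the feasibility step and the $\arg\max$ caveat more explicitly than the paper does.
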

\begin{proof}
	Fix $\cR \subseteq 2^m$. We must show that $\F(\cR)$ is a maximal subset of $\cR$ among all sets in $\sU$. In other words, we must show that $|\F(\cR)| \geq |\F(\cR')|$ for all $\cR'$ such that $\cR \supseteq \cU$ when $\cU$ is defined by $\F(\cR') $. This claim holds due to idempotency and monotonicity, since $\cU \subseteq \cR$ implies that $|\F(\cR')| = |\F(\F(\cR'))| \leq |\F(\cR)|$.
\end{proof}

Thus, there is a close relationship between filtering and enforcing structural constraints in multiple testing. 

It is then of interest to compare  Structured BH  to other procedures for structured multiple testing. Specifically, STAR \cite{LetF17} also controls FDR under arbitrary structural constraints. STAR proceeds by first constructing a hypothesis ordering $\pi(1), \pi(2), \dots, \pi(m)$ so that non-nulls are likely to occur near the beginning of the ordering and so that candidate rejection sets $\cU_j \equiv \{\pi(1), \dots, \pi(j)\} \in \sU$ for all $j \in [m]$. Once this ordering is constructed, the final rejection set $\cU^* = \cU_{j^*}$ is chosen via an accumulation test \cite{LB17}. 

STAR works well for structure classes $\sU$ that can be built up recursively, such as convex regions or subtrees (see \cite{LetF17} for these and other examples). However, the methodology is not designed for structure classes of the kind we consider here, such as sets of nodes in a DAG satisfying the outer nodes property. Indeed, it is not clear how one would choose an ordering of nodes so that the first $k$ always have the outer nodes property. On the other hand, STAR can boost power if the structure is informative, whereas Structured BH cannot. 
In summary, despite the superficial similarity between Structured BH and STAR, these two methods address different sets of problems.

\end{document}